\newtheorem{claim}{}[section]
\newtheorem{theorem}[claim]{Theorem}
\newtheorem{proposition}[claim]{Proposition}
\newtheorem{corollary}[claim]{Corollary}
\renewenvironment{proof}{\noindent{\it Proof. \hskip0pt}}
                      {$\square$\par\medskip}
\begin{document}
\baselineskip 6.0 truemm
\parindent 1.5 true pc

\newcommand\lan{\langle}
\newcommand\ran{\rangle}
\newcommand\tr{{\text{\rm Tr}}\,}
\newcommand\ot{\otimes}
\newcommand\ol{\overline}
\newcommand\join{\vee}
\newcommand\meet{\wedge}
\renewcommand\ker{{\text{\rm Ker}}\,}
\newcommand\image{{\text{\rm Im}}\,}
\newcommand\id{{\text{\rm id}}}
\newcommand\tp{{\text{\rm tp}}}
\newcommand\pr{\prime}
\newcommand\e{\epsilon}
\newcommand\la{\lambda}
\newcommand\inte{{\text{\rm int}}\,}
\newcommand\ttt{{\text{\rm t}}}
\newcommand\spa{{\text{\rm span}}\,}
\newcommand\conv{{\text{\rm conv}}\,}
\newcommand\rank{\ {\text{\rm rank of}}\ }
\newcommand\re{{\text{\rm Re}}\,}
\newcommand\ppt{\mathbb T}
\newcommand\rk{{\text{\rm rank}}\,}
\newcommand\bcolor{\color{blue}}
\newcommand\ecolor{\color{black}}
\newcommand\sss{\omega}

\title{Separable states with unique decompositions}

\author{Kil-Chan Ha and Seung-Hyeok Kye}
\address{Faculty of Mathematics and Applied Statistics, Sejong University, Seoul 143-747, Korea}
\address{Department of Mathematics and Institute of Mathematics\\Seoul National University\\Seoul 151-742, Korea}

\date\today

\thanks{partially supported by NRFK 2012-0002600 and NRFK 2012-0000939}

\subjclass{81P15, 15A30, 46L05}

\keywords{separable, entanglement, positive partial transpose, face, simplex, rank, length, unique
decomposition}

\begin{abstract}
We search for faces of the convex set consisting of
all separable states, which are affinely isomorphic to simplices,  to get separable states with unique decompositions.
In the two-qutrit case, we found that six product vectors spanning a five
dimensional space give rise to  a face isomorphic to the
5-dimensional simplex with six vertices, under suitable linear independence assumption.
 If the partial conjugates of six product vectors also span a 5-dimensional space, then
this face is inscribed in the face for PPT states whose boundary
shares the fifteen 3-simplices on the boundary of the 5-simplex. The
remaining boundary points consist of PPT entangled edge states of
rank four. We also show that every edge state of rank four arises in
this way.  If the partial conjugates of the above six product
vectors span a 6-dimensional space then we have a face isomorphic to
5-simplex, whose interior consists of separable states with unique
decompositions, but with non-symmetric ranks. We also construct a
face isomorphic to the 9-simplex. As applications, we give answers
to questions in the literature
\cite{chen_dj_semialg,chen_dj_ext_PPT}, and construct $3\ot 3$ PPT
states of type (9,5). For the qubit-qudit cases with $d\ge 3$, we
also show that $(d+1)$-dimensional subspaces give rise to faces
isomorphic to the $d$-simplices, in most cases.
\end{abstract}

\maketitle

\section{Introduction}\label{sec:intro}

The notion of quantum entanglement is one of the main current
research subjects in quantum information theory as well as quantum
physics, where the fundamental question is, of course, how to
distinguish entangled states from separable ones  \cite{werner}. One of useful
criteria is to use the notion of partial transpose, as it was observed
by Choi \cite{choi-ppt} and Peres \cite{peres}: The partial
transpose of a separable state is still  positive, that is, positive semi-definite,  but
the converse is not true. Even though we have a more complete
criterion using positive linear maps \cite{eom-kye,horo-1}, it is
not so easy to apply this criterion due to the current stage of poor
understanding on positive linear maps.

We denote by $\mathbb S$ the convex set of all separable states on
$M_m\ot M_n$, where $M_m$ denotes the algebra of all $m\times m$
matrices over the complex field. A state is called separable if it is a
convex combination of pure product states. Therefore, every
separable state is of the form
\begin{equation}\label{sep}
\varrho=\sum_{i=1}^k \lambda_i |x_i\ot y_i\rangle\langle x_i\ot y_i|,
\end{equation}
with unit   product
vectors $|x_i\ot y_i\rangle:=|x_i\rangle \otimes |y_i\rangle \in\mathbb C^m\ot \mathbb C^n$ and positive numbers $\lambda_i$ satisfying $\sum_{i=1}^k\lambda_i=1$.
A non-separable state is called entangled.
The partial transpose $\varrho^\Gamma$ of a state $\varrho$ is obtained by taking the transpose in the first component. Then the partial
transpose of a separable state in (\ref{sep}) is given by
$$
\varrho^\Gamma=\sum_{i=1}^k \lambda_i |\bar x_i\ot y_i\rangle\langle \bar x_i\ot y_i|,
$$
which is still positive, where $|\bar x\rangle$ denotes the vector whose entries are the complex conjugates of the corresponding
entries of the vector $|x\rangle$. The product vector $|\bar x\ot y\rangle$ is called the
partial conjugate of $|x\ot y\rangle$.
 We denote by $\mathbb T$ the convex set of all states whose partial transposes are positive.
 The positive partial transpose (PPT) criterion tells us that the relation
$\mathbb S\subset\mathbb T$ holds. These two convex sets coincide if and only if $mn\le 6$,
as it was shown by Woronowicz \cite{woronowicz},
Choi \cite{choi-ppt} and Horodecki's \cite{horo-1}.

We note that the boundary between separable states and entangled
ones consists of faces of the convex set $\mathbb S$, and so it is
very important to understand the whole facial structures of $\mathbb
S$. First of all, we note that every extreme point of $\mathbb S$ is
a pure product state $|x\ot y\rangle\langle x\ot y|$. The next step
to understand the facial structures of $\mathbb S$ is naturally to
find faces of $\mathbb S$ which are affinely isomorphic to
simplices. We call those {\sl simplicial faces}. We recall
that a convex subset $F$ of a convex set $C$ is said to be a face of
$C$ is the following property hold:
$$
x,y \in C,\, (1-t)x+t y\in F \text{ for some } t\in (0,1)\ \Longrightarrow \ x, y \in F.
$$
We note that every point $p$ in a convex set determines a unique
face in which $p$ is an interior point with respect to the relative topology
given by the affine manifold generated by the face. This is the smallest face containing the point $p$.
By definition of face, it is clear that the following two statements for a separable state $\varrho$ of the form
(\ref{sep}) are equivalent:
\begin{itemize}
\item
$\varrho$ determines a simplicial face.
\item
The decomposition (\ref{sep}) of $\varrho$ is unique.
\end{itemize}

It was shown by Kirkpatrick \cite{kirk} that if $\{|x_i\rangle\}$ is
pairwise distinct up to constant multiplications and
$\{|y_i\rangle\}$ is linearly independent then the decomposition
(\ref{sep}) is unique. This result was also found by Alfsen and
Shultz \cite{alfsen} independently in the context of facial
structures. See also \cite{alfsen_2} for further progresses.
Recently, Chen and Djokovi\'c \cite{chen_dj_ext_PPT} extended this result to show
that if product vectors in (\ref{sep}) with $k\le m+n-2$ satisfy a suitable linear independence condition for each subsystem
then they are linearly independent and $|x_i \otimes
y_i\rangle$ are the only product vectors in the span of them.
This clearly implies that the separable states determined by these vectors
have unique decompositions.
We note that this result covers the cases when the ranks of
separable states do not exceed $m+n-2$. The main purpose of this
paper is to search for separable states with unique decompositions
whose ranks exceed the number $m+n-2$.

In the two-qutrit case, it is well known
that generic five dimensional subspaces of $\mathbb C^3\otimes\mathbb C^3$
have exactly six product vectors $\{|x_i\otimes y_i\rangle\}$, and their orthogonal complements have no product vectors.
We show that these six product vectors give rise to
a simplicial face isomorphic to the $5$-simplex $\Delta_5$ whose interior consists of separable states of rank five. The
interiors of the six maximal faces,
isomorphic to the $4$-simplex $\Delta_4$, also consist of separable states of rank five.
The interiors of the next fifteen maximal faces, isomorphic to the $3$-simplex $\Delta_3$, consist of separable states
of rank four, and so on. We consider the set $F$ of all PPT states whose ranges are contained in the span of $\{|x_i\otimes y_i\rangle\}$
such that the ranges of partial conjugates are contained in the span of $\{|\bar x_i\otimes y_i\rangle\}$. This is a face of the
convex set $\mathbb T$ consisting of all PPT states \cite{ha_kye_04}.
 If the partial conjugates of six product vectors also span the $5$-dimensional space,
then we  see that the face $\Delta_5$ (we use the same notion with the
simplex itself) is inscribed in $F$ so that fifteen faces $\Delta_3$ are located on the boundary of $F$. With a mild additional
condition, we also show that the interior of six maximal faces $\Delta_4$ are contained in the interior of $F$.
Remaining boundary of $F$ consists of PPT entangled edge states of rank four, which are extreme points of $F$.
We also show that every edge state of rank four arises in this way.
From this, we see that
every PPT entangled edge state of rank four is the difference  of scalar multiples of two separable states with rank five.

We have another byproducts with this picture. We see that if we take
two boundary points of $F$ so that they are \lq outside\rq\  of
different maximal faces, then the convex sums of these two points
intersect separable states in $\Delta_5$. This gives an affirmative
answer to the Problem 1 in \cite{chen_dj_ext_PPT} which asks if a
sum of two extreme PPT entangled states may be separable. For a given
entangled state, it is an important problem to find the nearest
separable state. See \cite{pitt}. For generic $3\ot 3$ PPT entangled
states of rank five, our picture
gives a simple method to find the nearest separable state within the
face of PPT states determined by $\varrho$.

In the above discussion, it may happen that the partial conjugates
of six product vectors span the $6$-dimensional space. In this case,
we also have a simplicial face isomorphic to the $5$-simplex
$\Delta_5$. It is interesting to note that the interior points
$\varrho$ of this face have unique decompositions but have
asymmetric ranks: $\rk\varrho\neq\rk\varrho^\Gamma$. It turns out
that the convex combinations of them and the edge states of rank
five constructed in \cite{kye_osaka} give rise to PPT states
$\varrho$ of type $(5,9)$, that is, $\rk\varrho=5$ and
$\rk\varrho^\Gamma=9$. PPT states of such type were missing in a
numerical search \cite{lein}, but the existence has been  noticed in
\cite{hhms}. We give explicit examples of such states.

The {\sl length} of a separable state $\varrho$ is defined by the minimum number $k$ for which the expression (\ref{sep}) is possible.
P. Horodecki \cite{p-horo} showed that the length of an $m\otimes n$ separable state does not exceed $(mn)^2$.
The notion of length was defined in \cite{DiVin}, where it was shown that the length is invariant under the
operation of partial transpose, and so the length may be strictly greater than the rank of the state.
See also \cite{uhlmann,wootters}.
 Our picture also gives concrete examples.
Any separable state $\varrho$ in the interior of $\Delta_5$ has length six,
while   rank of $\varrho$ is five. The rank of $\varrho^\Gamma$ may be five or six.  Therefore,
there exists a state $\varrho$ whose length is greater than the rank of the state
even in the case of $\text{\rm rank}\, \varrho =\text{\rm rank} \, \varrho^{\Gamma}$.
But the lengths of these examples do not exceed $mn$.
Now, it is  natural to ask if there exists an $m\otimes n$ separable state of  length exceeding the number $mn$.
In a recent paper \cite{chen_dj_semialg}, Chen and Djokovi\'c showed that
such a separable state exists when $(m-2)(n-2)>1$, and conjectured \cite[Conjecture 10]{chen_dj_semialg}
that a certain number may be the maximum length. By this
conjecture, the length of $3\ot 3$ separable states do not exceed $9$.

We note that the face $\Delta_5$ of $\mathbb S$ is induced by the face $F$ of $\mathbb T$, in the sense that $\Delta_5$ is the intersection
of $\mathbb S$ and a face $F$ of $\mathbb T$. We call such faces of $\mathbb S$ {\sl induced} ones, as it was introduced in \cite{choi_kye}.
Note that the maximal faces $\Delta_4$ are not induced.
We construct a simplicial face isomorphic to the $9$-simplex $\Delta_9$ with ten extreme points.
This face is turned out to be non-induced.
Our construction gives an example of a $3\otimes 3$ separable state
whose length is $10$, to disprove the conjecture mentioned above.
We note that a face of a face is again a face.
Therefore, we have simplicial faces with arbitrary affine dimensions less than $10$, considering faces of $\Delta_9$,

In the $2\otimes n$ cases, the results in \cite{alfsen,chen_dj_ext_PPT,kirk} give us simplicial faces isomorphic to
the $(n-1)$-simplex. We show for $n\ge 3$ that $n+1$ choices of vectors in
$\mathbb C^2\ot\mathbb C^n$ also give rise to
simplicial faces isomorphic to the $n$-simplex $\Delta_n$,
under suitable linear independence assumption. In this case, we do not have the pictures as in the $3\ot 3$ case,
since there is no PPT entangled states outside of $\Delta_{n}$.

We note that the convex hull of $\{|x_i\otimes y_i\rangle\langle x_i\otimes y_i|:i=1,2,\dots,n\}$ is a simplex
if and only if they are linearly independent in the
{\sl real} vector space of all Hermitian matrices.
It should be noted that this convex hull need not to be a face.
In this context, we begin this paper to search for
conditions on the family $\{|x_i\otimes y_i\rangle\}$ for which
$\{|x_i\otimes y_i\rangle\langle x_i \otimes y_i|\}$ is linearly independent in the next section.
In Section \ref{sec:qutrit}, we consider the $3\ot 3$ case to describe the picture explained above,
and construct separable states with unique decompositions which have asymmetric ranks in Section \ref{SPA}.
We  explain  the notion of
induced faces of $\mathbb S$ in Section \ref{indeced_face}, and construct a $3\ot 3$ separable state which determines the face isomorphic to
the $9$-simplex $\Delta_9$. Finally, we consider the $2\ot n$ cases in Section \ref{sec:qubit-qudit}, and discuss
briefly higher dimensional cases and related problems in the final section.

\section{linear independence of pure product states}\label{sec:indep}

In this section, we investigate to what extent pure product states may be linearly independent.
For this purpose, the notions of {\sl generalized unextendible product basis} and {\sl general position}
are very useful. These two notions play central roles for the description of $3\ot 3$ PPT entangled states of rank four,
as it was developed independently by Skowronek \cite{sko}, and Chen and Djokovi\'c \cite{chen}.
A finite family of product vectors $\{|x_i\otimes y_i\rangle\}$ is said be a generalized unextendible product basis if
there is no product vector in the orthogonal complement of the span of $\{|x_i\otimes y_i\rangle\}$.
It was shown in \cite{sko} that the set $\{|x_i\otimes y_i\rangle \in\mathbb C^m\ot\mathbb C^n:i=1,2,\dots,k\}$
with $k\ge m+n-1$ is a generalized unextendible product basis
if and only if for any partition $\{1,2,\dots,k\}=I\cup J$, at least one of $\{|x_i\rangle:i\in I\}$ or $\{|y_j\rangle :j\in J\}$
spans the whole spaces.

On the other hand,
a family of product vectors
$\{|x_i \otimes y_i\rangle \in \mathbb C^m\otimes\mathbb C^n : i\in I\}$ is said to be in
general position if for any $J\subset I$ with $|J|\le m$ the set $\{|x_j\rangle: j\in J\}$
is linearly independent, and for any $K\subset I$ with $|K|\le n$ the set $\{|y_k\rangle :k\in K\}$ is linearly independent,
where $|J|$ denotes the cardinality of the set $J$.
It is clear that product vectors $\{|x_i\otimes y_i\rangle :i=1,2,\dots,k\}$ with $k\ge m+n-1$
in general position form a generalized unextendible product basis.
It was shown in \cite{sko} that if $k= m+n-1$ then
they are in general position if and only if they form a generalized unextendible product basis.
Therefore, two notions are equivalent for five product vectors in $\mathbb C^3\ot\mathbb C^3$.
For six product vectors in $\mathbb C^3\ot\mathbb C^3$,  there exists a generalized unextendible product basis which is not in general position. See the last paragraph of Section \ref{sec:qutrit} for such a concrete example. Therefore,
these two notions are different in general.

It was shown in Lemma 28 of \cite{chen_dj_ext_PPT} that if
$\{|x_i\otimes y_i\rangle \in \mathbb C^m\otimes \mathbb C^n:i=1,2,\dots,k\}$ with $k\le m+n-2$ is in general position then
they are linearly independent and $|x_i \otimes y_i\rangle$ are the only product vectors in the span of them.
It is clear that if a set of product vectors is linearly independent then the corresponding pure product states are also
linearly independent, but the converse is not true. Therefore, this result tells us that they determines a simplicial face.
In the $3\ot 3$ case, this gives us a simplicial face isomorphic to the $3$-simplex $\Delta_3$, as it was explained in Introduction.
Focusing on the linear independence, we have the following:

\begin{proposition}\label{prop:prod_vec}
Let $1\le k \le m+n-1$. If $k$ product vectors
$$
\{|x_i \otimes y_i\rangle\,:\,i=1,2,\dots, k\}
$$
are in general position in $\mathbb C^m\otimes \mathbb C^n$,
then they are linearly independent.
\end{proposition}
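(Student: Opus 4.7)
My plan is a direct proof by contradiction, working with the smallest support of any dependence. I would suppose $\sum_{i=1}^{k} c_i\,|x_i\ot y_i\rangle = 0$ is a nontrivial relation, set $S=\{i:c_i\neq 0\}$, and let $\ell = |S|$. After discarding vanishing coefficients, the sub-family indexed by $S$ is still in general position and still satisfies $\ell\le k\le m+n-1$, so it is enough to rule out a dependence in which every coefficient is nonzero.

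The argument then splits into two cases according to the size of $\ell$. First, if $\ell\le n$, then general position makes $\{|y_i\rangle:i\in S\}$ linearly independent, and I would extend it to a basis of $\mathbb C^n$ and read off the $|y_i\rangle$-components of the relation inside $\mathbb C^m\ot\mathbb C^n$; this instantly forces $c_i|x_i\rangle = 0$ for each $i\in S$, which is absurd since $|x_i\rangle\neq 0$. So one is reduced to $\ell>n$, and here the hypothesis $\ell\le m+n-1$ supplies the crucial arithmetic bound $1+(\ell-n)\le m$ to be exploited in the next step.

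In the remaining case I would choose any $S'\subset S$ with $|S'| = n$, use general position to conclude that $\{|y_i\rangle:i\in S'\}$ is a basis of $\mathbb C^n$, and expand each $|y_j\rangle$ for $j\in S\setminus S'$ as $|y_j\rangle = \sum_{i\in S'} a_{ij}|y_i\rangle$. Regrouping the relation gives
\[
0=\sum_{i\in S'}\Bigl(c_i|x_i\rangle+\sum_{j\in S\setminus S'}c_j\,a_{ij}|x_j\rangle\Bigr)\ot|y_i\rangle,
\]
and linear independence of the $|y_i\rangle$'s for $i\in S'$ forces each bracket to vanish. For fixed $i\in S'$ the bracket is a dependence among the vectors $|x_t\rangle$ with $t$ ranging over $\{i\}\cup(S\setminus S')$, a set of cardinality $1+(\ell-n)\le m$; general position then makes these $|x_t\rangle$'s linearly independent, which forces $c_i=0$ and contradicts $i\in S$.

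The main obstacle is really only bookkeeping rather than conceptual difficulty. The proof is a single ``transfer'' of a dependence from the $\mathbb C^n$-factor back to the $\mathbb C^m$-factor, and its effectiveness hinges on having enough room to apply general position on both sides, i.e.\ on the inequality $1+(\ell-n)\le m$. This is precisely what the hypothesis $k\le m+n-1$ secures, and it is reassuring to see the bound enter the argument in a tight way, consistent with the fact that longer families of product vectors can cease to be linearly independent.
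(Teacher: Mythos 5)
Your proof is correct and is essentially the paper's argument: both transfer the dependence from the tensor product to the first factor by contracting against a functional that annihilates $n-1$ of the $|y_i\rangle$'s, and both use the bound $k\le m+n-1$ to guarantee that the surviving $|x_j\rangle$'s number at most $m$ and hence are linearly independent by general position. The paper performs this contraction once (with $|\omega\rangle\perp\spa\{|y_1\rangle,\dots,|y_{n-1}\rangle\}$, killing those terms and finishing the remaining $n-1$ terms directly), while you organize the same mechanism as a minimal-support contradiction with coefficient comparison in a basis of $n$ of the $|y_i\rangle$'s; the difference is only bookkeeping.
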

\begin{proof} We may assume that $k>n$. Suppose that  $\sum_{i=1}^k c_i |x_i\rangle \otimes |y_i\rangle =0$.
For any vector $|\omega\rangle$ in the orthogonal complement of the space $\text{span}\{|y_i\rangle : 1\le i \le n-1\}\subset \mathbb C^n$,
we see that
\[
\langle \omega|y_i\rangle=0 \quad (1\le i \le n-1),\qquad \langle \omega|y_j\rangle\neq 0 \quad (n\le j \le k),
\]
since product vectors $|x_i \otimes y_i\rangle$ are in general position. We define a linear map $T_{\omega}$
from $\mathbb C^m\otimes \mathbb C^n$ to $\mathbb C^m$ by $T_{\omega}(|x\rangle \otimes |y\rangle)=\langle \omega|y\rangle x$. Then we have
\[
T_{\omega}\left(\sum_{i=1}^k c_i |x_i\rangle \otimes |y_i\rangle\right)=0
\Longleftrightarrow \sum_{j=n}^k c_j \langle \omega|y_j\rangle |x_j\rangle =0.
\]
Note that $2\le k-n+1 \le m$.
Since $k-n+1$ vectors $|x_j\rangle$ are linearly independent and $\langle \omega|y_j\rangle \neq 0 $
for $n\le j\le k$, we see that all $c_j=0$ for $n\le j\le k$. This implies that $\sum_{i=1}^{n-1} c_i|x_i\rangle \otimes |y_i\rangle=0$,
and thus we see that  $c_i=0$ for $1\le i \le n-1$. Consequently, we get $c_i=0$ for all $1\le i\le k$. This completes the proof.
\end{proof}

We note that a $3\otimes 3$ PPT entangled edge state with rank four has the $5$-dimensional kernel
which has six product vectors in general position by \cite{chen}. Therefore,
the number $m+n-1$ in Proposition \ref{prop:prod_vec} is the best choice  at least for the case of $m=n=3$.
We have seen that product vectors in general position give rise to linearly independent pure product vector states
whenever the number of vectors is less than $m+n$. The following proposition extends this.

\begin{proposition}\label{prop:simplex}
Let $1\le k \le 2m+2n-3$.
If $k$ product vectors
$$
\mathcal S:=\{|z_i\rangle=|x_i\otimes y_i\rangle :i=1,2,\dots,k\}
$$
are in  general position in $\mathbb C^m\otimes \mathbb C^n$,
then the corresponding pure product states are linearly independent.
\end{proposition}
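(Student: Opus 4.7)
The plan is to argue by contradiction using positivity. Suppose $\sum_{i=1}^k c_i |z_i\rangle\langle z_i|=0$ is a nontrivial real linear relation, and split the indices into $I^+=\{i:c_i>0\}$ and $I^-=\{i:c_i<0\}$; both must be nonempty, since rearranging gives
\[
\sum_{i\in I^+} c_i |z_i\rangle\langle z_i|=\sum_{j\in I^-}(-c_j)|z_j\rangle\langle z_j|,
\]
with each side a positive semidefinite operator, and the only way one side can vanish is if all coefficients there are zero. Because each side is a strictly positive combination of rank-one positive operators, its range coincides exactly with the span of the involved product vectors. Writing $V$ for this common range, we obtain $V=\spa\{|z_i\rangle:i\in I^+\}=\spa\{|z_j\rangle:j\in I^-\}$.

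Next I would invoke the size bound. After swapping the two index sets if necessary, we may assume $|I^+|\le|I^-|$; then $2|I^+|\le|I^+|+|I^-|\le 2m+2n-3$ forces $|I^+|\le m+n-2$. Since general position is inherited by subfamilies, Proposition \ref{prop:prod_vec} gives that $\{|z_i\rangle:i\in I^+\}$ is linearly independent, so $\dim V=|I^+|$. More importantly, Lemma 28 of \cite{chen_dj_ext_PPT}, which applies precisely in the regime $|I^+|\le m+n-2$, asserts that the only product vectors in $V$ are scalar multiples of the $|z_i\rangle$ with $i\in I^+$.

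To close the argument, take any $j\in I^-$. Since $|z_j\rangle\in V$ is a product vector, the previous step forces $|z_j\rangle=\alpha|z_i\rangle$ for some $i\in I^+$ and scalar $\alpha$. But for $m,n\ge 2$ the general-position hypothesis makes any two of the $|x_i\rangle$'s (and any two of the $|y_i\rangle$'s) linearly independent, hence non-proportional, so distinct indices produce non-proportional product vectors. Since $I^+\cap I^-=\emptyset$, no such $j$ can exist, and we have the desired contradiction.

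The main obstacle is the pigeonhole step: one has to see that the bound $k\le 2m+2n-3$ is precisely what is needed to push the smaller of $|I^+|,|I^-|$ into the range $\le m+n-2$ where the Chen--Djokovi\'c uniqueness-of-product-vectors result is available. After that, the range identification and the non-proportionality observation are essentially formal. A minor subtlety is that the conclusion concerns linear independence of the Hermitian rank-one matrices $|z_i\rangle\langle z_i|$, but any complex linear relation among them decomposes, via real and imaginary parts, into two real linear relations, so it suffices to handle the real case as above.
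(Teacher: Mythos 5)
Your proof is correct, but it follows a genuinely different route from the paper's. The paper argues by direct linear algebra: given $\sum_{i=1}^k c_i|z_i\rangle\langle z_i|=0$, it pairs the relation with a test product vector $|\phi\ot\psi\rangle$ chosen with $|\phi\rangle\perp\spa\{|x_i\rangle: i\le m-1\}$ and $|\psi\rangle\perp\spa\{|y_i\rangle: m\le i\le m+n-2\}$; general position guarantees that the surviving coefficients $\langle x_i|\phi\rangle\langle y_i|\psi\rangle$ are nonzero precisely for $i\ge m+n-1$, and since those at most $m+n-1$ product vectors are linearly independent by Proposition \ref{prop:prod_vec}, the corresponding $c_i$ vanish; the remaining $m+n-2$ product vectors are again linearly independent, which kills the rest. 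Your argument instead exploits positivity: the sign decomposition, the identification of the common range, and the pigeonhole bound $|I^+|\le m+n-2$ reduce everything to the Chen--Djokovi\'c statement that the span of at most $m+n-2$ product vectors in general position contains no other product vectors (Lemma 28 of \cite{chen_dj_ext_PPT}), plus pairwise non-proportionality. Both proofs spend the hypothesis $k\le 2m+2n-3$ by splitting the index set into a block of size $m+n-2$ and a block of size at most $m+n-1$, but yours requires the strictly stronger external input (absence of further product vectors in the span, not merely linear independence of the given ones), whereas the paper's is self-contained modulo Proposition \ref{prop:prod_vec} and handles complex coefficients directly without the real/imaginary reduction. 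In exchange, your positivity argument gives a sharper qualitative conclusion: any nontrivial real relation among the pure product states would force two of the product vectors to be proportional. Two cosmetic remarks: the computation $\dim V=|I^+|$ is never used afterwards, and the non-proportionality step tacitly assumes $m,n\ge 2$, which is the intended setting throughout the paper.
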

\begin{proof}
From Proposition~\ref{prop:prod_vec}, any $m+n-1$ product vectors of $\mathcal S$ are linearly independent.
So we may assume that $m+n-1<k\le 2m+2n-3$. We consider two subspaces
$$
\begin{aligned}
\mathcal X&= \text{\rm span}\{|x_i\rangle: 1\le i \le m-1\}\subset \mathbb C^m,\\
\mathcal Y&= \text{\rm span}\{|y_i\rangle: m\le i \le n+m-2\}\subset \mathbb C^n.
\end{aligned}
$$
Choose two vectors $|\phi\rangle \in \mathbb C^m$ and $|\psi \rangle \in \mathbb C^n$ such that
$|\phi\rangle \in \mathcal X^{\perp}$ and $|\psi\rangle \in \mathcal Y^{\perp}$.
We note that
\begin{equation}\label{eq:lin_ind}
n+m-1\le i \le k \ \Longrightarrow\ \langle x_i|\phi\rangle \langle y_i|\psi\rangle \neq 0,
\end{equation}
since product vectors of $\mathcal S$ are in general position.

Now, we suppose that $\sum_{i=1}^k c_i|z_i\rangle \langle z_i|=0$. Then we have that
\[
0=\sum_{i=1}^k c_i |x_i\otimes y_i\rangle \langle x_i\otimes y_i| \phi\otimes \psi\rangle
 =\sum_{i=n+m-1}^k c_i \langle x_i|\phi\rangle \langle y_i|\psi\rangle |x_i\otimes y_i\rangle.
\]
Since any $m+n-1$ product vectors of $\mathcal S$ are linearly independent, we see that
$c_i=0$ for all $n+m-1\le i \le k$ from the condition \eqref{eq:lin_ind}.
On the other hand, we know that $n+m-2$ pure product states
$\{|z_i\rangle \langle z_i|:1\le i \le n+m-2\}$ are linearly independent since product vectors
$\{|z_i\rangle :1\le i\le n+m-2\}$ are linearly independent. Consequently,
we see that $c_i=0$ for all $1\le i \le k$. This completes the proof.
\end{proof}

As for the product vectors which form a generalized unextendible product basis, we have the following:

\begin{proposition}\label{3x3}
Suppose that six product vectors $\{|x_i\otimes y_i\rangle:i=1,2,\dots,6\}$ in $\mathbb C^3\otimes\mathbb C^3$
are pairwise distinct, and form a generalized unextendible product basis. Then
the corresponding six pure product states are linearly independent.
\end{proposition}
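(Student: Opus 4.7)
My plan is to argue by contradiction from a hypothetical non-trivial linear relation $\sum_{i=1}^{6} c_i P_i = 0$ among the pure product states $P_i := |x_i\otimes y_i\rangle\langle x_i\otimes y_i|$, using a sign-split, partial traces, and Kirkpatrick's uniqueness theorem. Since each $P_i$ is Hermitian, I may take the $c_i$ to be real. I would then split the index set into $I_\pm = \{i : \pm c_i > 0\}$ and $I_0 = \{i : c_i = 0\}$, observe that neither $I_+$ nor $I_-$ is empty (else a sum of positive semidefinite matrices equals zero forcing each $c_i=0$), and introduce
\[
\rho := \sum_{i \in I_+} c_i\, P_i \;=\; \sum_{i \in I_-} |c_i|\, P_i,
\]
a positive semidefinite matrix. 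Its range equals $\mathrm{span}\{|x_i\otimes y_i\rangle : i \in I_\pm\}$ for either sign choice, and the partial trace over the second factor gives the analogous identity $X := \mathrm{span}\{|x_i\rangle : i \in I_+\} = \mathrm{span}\{|x_i\rangle : i \in I_-\}$, with a symmetric statement for the $|y_i\rangle$.

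Next I would invoke the partition characterisation of GUPB recalled before Proposition~\ref{prop:prod_vec}: taking any partition with $|J|=2$, and using that two vectors cannot span $\mathbb{C}^3$, any four of the $|x_i\rangle$ must span $\mathbb{C}^3$; in particular no four $|x_i\rangle$ lie in a common 2-plane, and symmetrically for the $|y_i\rangle$. If $\dim X \leq 2$, this forces $|I_+ \cup I_-| \leq 3$, leaving the subcases $k=1,2,3$ to treat directly: for $k=1$ the relation reads $c_1 P_1=0$; for $k=2$, the trace identity $c_1+c_2=0$ gives $P_1=P_2$; for $k=3$, a mixed-sign analysis forces some $c_i P_i$ to equal a positive combination of the other two rank-one terms, and since a rank-one positive semidefinite matrix can be written as a positive combination of two rank-one positive semidefinite matrices only if they are proportional, two $P_j$'s coincide. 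In every subcase the pairwise distinctness of the $P_i$ is violated, so $\dim X=3$ and, symmetrically, the $y$-span equals $\mathbb{C}^3$.

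From $\dim X = 3$ I obtain $|I_\pm|\geq 3$, which combined with $|I_+|+|I_-|\leq 6$ forces $|I_+|=|I_-|=3$ and $I_0=\emptyset$. After relabelling so that $I_+ = \{1,2,3\}$ and $I_- = \{4,5,6\}$, both $\{|x_1\rangle,|x_2\rangle,|x_3\rangle\}$ and $\{|y_1\rangle,|y_2\rangle,|y_3\rangle\}$ are bases of $\mathbb{C}^3$. I would then invoke Kirkpatrick's uniqueness theorem recalled in the introduction: the left-factor vectors in the decomposition $\rho = \sum_{i=1}^{3} c_i P_i$ are pairwise non-proportional and the right-factor vectors are linearly independent, so this is the only decomposition of $\rho$ as a positive combination of pure product states. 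Comparing with the competing decomposition $\rho = \sum_{i=4}^{6} |c_i| P_i$ forces $\{P_1,P_2,P_3\} = \{P_4,P_5,P_6\}$ as (multi)sets, contradicting pairwise distinctness.

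The step I expect to be the main obstacle is the $\dim X \leq 2$ branch: one has to combine the partition inequality ``no four $|x_i\rangle$ lie in a 2-plane'' with the restrictive arithmetic of at most three rank-one positive semidefinite matrices summing to zero, and verify in each subcase that pairwise distinctness of the $P_i$ produces the required contradiction. Once this branch is disposed of, the remainder of the argument is a short and direct application of Kirkpatrick's theorem to the two competing decompositions of $\rho$ furnished by the sign-split.
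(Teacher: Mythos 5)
Your proof is correct, but it takes a genuinely different route from the paper's. The paper works directly on the linear relation $\sum_i c_i|z_i\rangle\langle z_i|=0$: using the partition characterisation of generalized unextendible product bases it constructs, for each index, a test product vector $|\phi\otimes\psi\rangle$ orthogonal to all but two of the $|x_i\otimes y_i\rangle$, so that contracting the relation against it reduces to a two-term identity between non-parallel product vectors and kills the coefficients a few at a time; this is elementary and self-contained, but needs the finer consequences of the partition property (such as $\dim\,\text{\rm span}(\mathcal Y\cup\{y_i\})=3$) to guarantee the test vectors exist. You instead split the relation by sign into two competing positive decompositions of a single positive semidefinite matrix $\rho$, use partial traces together with the single consequence of the GUPB condition that no four of the $|x_i\rangle$ (resp.\ $|y_i\rangle$) lie in a $2$-plane to force a $3+3$ split with bases on both tensor factors, and then conclude by Kirkpatrick's uniqueness theorem. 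Your argument is more conceptual --- it exhibits the obstruction as two distinct positive decompositions of one state --- and it uses only a weaker hypothesis than GUPB, so it proves a marginally more general statement; the price is the small case analysis for at most three nonzero coefficients and the reliance on Kirkpatrick's theorem as a black box, where the paper's computation is entirely self-contained. Note that both proofs read the hypothesis ``pairwise distinct'' as pairwise non-parallel, which is indeed how the paper uses it.
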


\begin{proof}
Suppose $\sum_{i=1}^6 c_i |z_i\rangle \langle z_i|=0$.
Consider two sets $\mathcal X=\{|x_1\rangle,|x_2\rangle,|x_3\rangle \}$
and $\mathcal Y=\{|y_4\rangle,|y_5\rangle,|y_6\rangle\}$ from product vectors. By assumption,
at least one of them, say $\mathcal X$, spans the whole space $\mathbb C^3$.
Then, we see the following
\[
\text{\rm dim}(\text{\rm span}(\mathcal Y))\ge 2,\quad \text{\rm dim}(\text{\rm span}(\mathcal Y\cup\{y_i\}))=3,
\]
for each $i=1,2,3$, by assumption. Therefore,
for each $i=1,2,3$, we can choose two vectors $|y_{i_1}\rangle$ and $|y_{i_2}\rangle$ in $\mathcal Y$ such that
$y_i\notin \text{\rm span}\{|y_{i_1}\rangle,|y_{i_2}\rangle\}$.
We denote the element of $\mathcal Y\setminus \{|y_{i_1}\rangle,|y_{i_2}\rangle\}$ by $|y_{i_3}\rangle$.
Now, for each $i=1,2,3$, we choose two vectors $|\phi_i\rangle $ and $|\psi_i\rangle$ such that
\[
|\phi_i\rangle \in \left(\text{\rm span}(\mathcal X\setminus \{|x_i\rangle\})\right)^{\perp},\quad
|\psi_i\rangle \in \left(\text{\rm span}\{|y_{i_1}\rangle,|y_{i_2}\rangle\}\right)^{\perp}.
\]
We note that $\langle x_i|\phi_i\rangle\langle y_i|\psi_i\rangle \neq 0$ because
$\text{\rm dim}(\mathcal X)=3$ and $y_i\notin \text{\rm span}\{|y_{i_1}\rangle,|y_{i_2}\rangle\}$. Therefore, we have that
\[
0=\sum_{k=1}^6 c_k|z_k\rangle \langle z_k|\phi_i\otimes \psi_i\rangle
=c_i \langle x_i|\phi_i\rangle\langle y_i|\psi_i\rangle |x_i\otimes y_i\rangle
+c_{i_3}\langle x_{i_3}|\phi_i\rangle \langle y_{i_3}|\psi_{i_3}\rangle |x_{i_3}\otimes y_{i_3}\rangle
\]
for each $i=1,2,3$. Since $|x_i\otimes y_i\rangle $ is not parallel to $|x_{i_3}\otimes y_{i_3}\rangle$, we can conclude that $c_1=c_2=c_3=0$.

Now,  we  choose a vector $|\psi\rangle \in \mathbb C^3$ and $|y_{k_1}\rangle\in \mathcal Y=\{|y_4\rangle,|y_5\rangle,|y_6\rangle\}$ such that
\[
\langle y_{k_1}|\psi\rangle =0 \ \text{ and }\
\langle y_{k_2}|\psi\rangle \langle y_{k_3}|\psi\rangle \neq 0,
\]
where $\{|y_{k_1}\rangle,|y_{k_2}\rangle,|y_{k_3}\rangle\}=\mathcal Y$. This is possible
because $\text{\rm dim}(\text{\rm span}(\mathcal Y))\ge 2$. Then we choose a vector $|\phi\rangle \in \mathbb C^3$ such that
$\langle x_{k_2} |\phi\rangle \langle x_{k_3}|\phi\rangle \neq 0$.
Consequently, we see that $c_{k_2}=c_{k_3}=0$ from the following equation
\[
0=\sum_{i=4}^6 c_i|z_i\rangle \langle z_i|\phi\otimes \psi\rangle
=c_{k_2} \langle x_{k_2}|\phi\rangle \langle y_{k_2}|\psi\rangle |x_{k_2}\otimes y_{k_2}\rangle
+c_{k_3} \langle x_{k_3}|\phi\rangle \langle y_{k_3}|\psi\rangle |x_{k_3}\otimes y_{k_3}\rangle.
\]
Finally, we see that $c_{k_1}=0$. This completes the proof.
\end{proof}

Very recently, it was shown in \cite{cohen} that if a separable
state $\varrho$ in (\ref{sep}) is not uniquely decomposed then there exists a subset $I\subset \{1,2,\dots,k\}$
with the cardinality $|I|\ge 2$ satisfying
\begin{equation}\label{cohen-inq}
\dim\spa\{|x_i\rangle : i\in I\}+
\dim\spa\{|y_i\rangle : i\in I\}\le |I|+1.
\end{equation}
If $k\le m+n-2$ then this inequality is very useful to infer unique decomposition of a separable state.
Indeed, if $k\le m+n-2$ and $\{|x_i\rangle \ot |y_i\rangle\}$  in (\ref{sep}) is in general position
then every subset $I$ violates this inequality,
and so we get the unique decomposition. This recovers the result in \cite{chen_dj_ext_PPT}.
On the other hand, if $k> m+n-2$ then
the value of the left hand side of (\ref{cohen-inq}) is $m+n$ in most cases,
and the inequality is not violated in these cases.
This means that we need another arguments for unique decompositions of separable states whose ranks exceed $m+n-2$.
The authors are grateful to Scott Cohen for useful discussions on the above result.

\section{Two qutrit separable states with unique decompositions}\label{sec:qutrit}

In this section, we restrict our attention to the two-qutrit case to look for separable states with unique decompositions.
We will concentrate on the rank five separable states, since states of rank up to four are covered by the results in
\cite{alfsen,chen_dj_ext_PPT,kirk}, as it was discussed already. We begin with the following simple observation,
as it was discussed in Introduction.

\begin{proposition}\label{prel}
Suppose that a finite family ${\mathcal P}$ of product vectors in $\mathbb C^m\ot\mathbb C^n$ has the following two
properties:
\begin{enumerate}
\item[(A)]
The corresponding family $\{|z\rangle\langle z|:z\in{\mathcal P}\}$ of pure product states is linearly independent,
\item[(B)]
If a product vector $|x\otimes y\rangle$ belongs to the span of ${\mathcal P}$ then it is parallel to a vector in $\mathcal P$.
\end{enumerate}
Then the convex hull $C_{\mathcal P}$ of $\{|z\rangle\langle z|:z\in{\mathcal P}\}$ is a simplicial face of $\mathbb S$.
\end{proposition}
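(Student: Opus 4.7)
The plan is to separate the statement into its two ingredients: that $C_{\mathcal P}$ is affinely isomorphic to a simplex, and that it is a face of $\mathbb S$. The first is essentially immediate from (A), so the real work is in the facial property, which will be driven by the range-containment principle for convex combinations of positive matrices, combined with (B) to force any extremal decomposition to live inside $C_{\mathcal P}$.

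For the simplex part, I would observe that $C_{\mathcal P}$ is the convex hull of the finitely many Hermitian matrices $\{|z\rangle\langle z|:z\in\mathcal P\}$, which lie in the affine hyperplane $\{\tau:\tr\tau=1\}$. Since (A) gives linear independence of these rank-one Hermitian matrices (as elements of the real vector space of Hermitian matrices) and the zero matrix does not belong to their affine hull, linear independence here is equivalent to affine independence; thus $C_{\mathcal P}$ is an affine simplex with vertex set $\{|z\rangle\langle z|:z\in\mathcal P\}$.

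For the face part, take $\varrho\in C_{\mathcal P}$ and suppose $\varrho=(1-t)\sigma_1+t\sigma_2$ with $\sigma_1,\sigma_2\in\mathbb S$ and $t\in(0,1)$; I must deduce $\sigma_1,\sigma_2\in C_{\mathcal P}$. Writing $\varrho=\sum_{z\in\mathcal P}\mu_z|z\rangle\langle z|$ with $\mu_z\ge 0$, the range of $\varrho$ is contained in the subspace $V:=\spa\{|z\rangle:z\in\mathcal P\}$. From $(1-t)\sigma_1\le\varrho$ and $t\sigma_2\le\varrho$, the standard fact about positive semidefinite matrices gives $\image\sigma_j\subset\image\varrho\subset V$. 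Now fix any separable decomposition $\sigma_j=\sum_i\lambda_{i,j}|w_{i,j}\rangle\langle w_{i,j}|$ with $\lambda_{i,j}>0$ and unit product vectors $|w_{i,j}\rangle$; each $|w_{i,j}\rangle$ lies in $\image\sigma_j$, hence in $V$. Applying (B) to each product vector $|w_{i,j}\rangle\in V$ shows it is parallel to some $|z\rangle\in\mathcal P$, so $|w_{i,j}\rangle\langle w_{i,j}|=|z\rangle\langle z|$ for that $z$. Grouping terms, $\sigma_j$ is a nonnegative combination of $\{|z\rangle\langle z|:z\in\mathcal P\}$; taking traces and using $\tr\sigma_j=1$ shows the combination is convex, so $\sigma_j\in C_{\mathcal P}$.

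I do not foresee a serious obstacle; the only step that needs care is the passage from ``$|w_{i,j}\rangle$ is a product vector in a decomposition of $\sigma_j$'' to ``$|w_{i,j}\rangle\in V$.'' This uses two routine facts: range monotonicity $(\sigma_1,\sigma_2\le c\varrho \Rightarrow \image\sigma_j\subset\image\varrho)$, and the observation that in any decomposition $\sigma_j=\sum_i\lambda_{i,j}|w_{i,j}\rangle\langle w_{i,j}|$ with positive weights, each $|w_{i,j}\rangle$ automatically lies in $\image\sigma_j$. Once this is in hand, (B) does the rest and the argument closes cleanly.
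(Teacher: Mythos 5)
Your proposal is correct and follows essentially the same route as the paper's own proof: condition (A) gives the simplex structure, and the facial property is obtained by pushing the range of each $\sigma_j$ into $\spa\mathcal P$ via range monotonicity and then invoking (B) on the product vectors of an arbitrary separable decomposition. The only difference is that you spell out the trace-normalization and range-containment details that the paper leaves implicit.
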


\begin{proof}
By the condition (A), we see that $C_{\mathcal P}$ is a simplex. To see that it is a face of $\mathbb S$,
suppose that $\varrho_1,\varrho_2\in\mathbb S$ and $\varrho_t:=(1-t)\varrho_0+t\varrho_1$ belongs to $C_{\mathcal P}$.
Then, it is clear that the range space of $\varrho_0:=\sum|w_i\rangle\langle w_i|$ is contained in the range space of $\varrho_t$, which is again
contained in the span of $\mathcal P$. By the condition (B), we have $|w_i\rangle \in\mathcal P$, and so
it follows that $\varrho_0\in C_{\mathcal P}$. Similarily, we can show that $\rho_1\in C_{\mathcal P}$. Therefore, $C_{\mathcal P}$ is a face of $\mathbb S$.
\end{proof}

We note that generic four dimensional subspaces of $\mathbb C^3\otimes\mathbb C^3$ have no product vectors, and their orthogonal complements
have exactly six product vectors
\begin{equation}\label{6-prod}
|z_i\rangle=|x_i\otimes y_i\rangle,\qquad i=1,2,\dots,6,
\end{equation}
which form a generalized unextendible product basis.
By Proposition \ref{3x3}   and Proposition \ref{prel},
we see that a convex combination of product states $|z_i\ran\lan z_i|$ with $i=1,2,\dots,6$ has
a unique decomposition, and determines a face $\Delta_5$ of the convex set $\mathbb S$ of all separable states.
This face is affinely isomorphic to the $5$-simplex with six extreme points. The state
$$
\varrho_0=\frac 16\sum_{i=1}^6 |z_i\ran\lan z_i|
$$
is located at the center of the simplicial face $\Delta_5$.
We note that any interior points of the simplex $\Delta_5$ give rise to examples of separable states whose lengths are strictly greater than
the maximum of the ranks of themselves and partial transposes, if the partial conjugates of \eqref{6-prod} also span the $5$-dimensional space.
Existence of such separable states has been claimed in \cite{chen_dj_semialg} without explicit construction.

From now on throughout this section, we suppose that the partial conjugates of (\ref{6-prod}) also
span the $5$-dimensional space. Note that the kernel of any PPT entangled edge states of rank four have this property.
We  consider a maximal face $\Delta_4$ of
$\Delta_5$ which is the convex hull of five product states
determined by five product vectors among (\ref{6-prod}), say,
$\{|z_1\rangle,\dots, |z_5\rangle\}$. We first consider the case
when   both these five product vectors and their partial conjugates  are linearly independent, and so
they span the   ranges of $\varrho_0$ and $\varrho_0^\Gamma$, respectively.   Any interior point $\sigma$ of
$\Delta_4$ is the convex combination of $\{|z_i\ran\lan
z_i|:i=1,2,\dots,5\}$ with nonzero coefficients, and so the range
spaces of $\varrho_0 $ and $\sigma$ coincide. Therefore, there is
$\epsilon_1>0$ such that $\sigma-\epsilon_1\varrho_0 $ is still
positive. We apply the same argument to the partial conjugates of
$\varrho_0 $ and $\sigma$ to see that there is $\epsilon_2>0$ such
that $\sigma-\epsilon\varrho_0 $  with $\epsilon=\text{min}\{\epsilon_1,\epsilon_2\}$ is a PPT state when normalized. We
note that this argument cannot be applied to a maximal face
$\Delta_3$ of $\Delta_4$ whose interior point has the range space with five linearly independent product vectors.
In this case, the range space of an interior point of $\Delta_3$ is a proper subspace of an interior point of $\Delta_4$.
We note that $\sigma-\epsilon\varrho_0 $ is entangled,
because the range space has only six product vectors in
(\ref{6-prod}) whose convex combination $\Delta_5$ does not contain
$\sigma-\epsilon\varrho_0 $. We take the largest number $\epsilon>0$
so that
\begin{equation}\label{exttt-ppt}
\rho=\sigma-\epsilon\varrho_0
\end{equation}
is of PPT. By the maximality of $\epsilon$, we see that the rank of
$\rho$ is less than $5$. Since every PPT state whose rank is less
than or equal to $3$ is separable by \cite{hlvc}, we see that $\rho$
is of rank four. By the results in \cite{chen,sko}, we conclude that
$\rho$ is a PPT entangled edge state which is an extreme point in
the convex set $\mathbb T$ of all PPT states.

Before going further, we remind the readers of the facial structures  of the convex set $\mathbb T$.
For a pair $(D,E)$ of subspaces of $\mathbb C^m\otimes\mathbb C^n$, it is easy to see that the set
$$
\tau(D,E):=\{\rho\in\mathbb T: {\mathcal R}\rho\subset D,\ {\mathcal R}\rho^\Gamma\subset E\}
$$
is a face of $\mathbb T$, which is possibly empty. It was shown in \cite{ha_kye_04} that every face of $\mathbb T$ is in this form
for a pair $(D,E)$ of subspaces. It was also shown that the interior is given by
$$
\inte \tau(D,E)=\{\rho\in\mathbb T: {\mathcal R}\rho= D,\ {\mathcal R}\rho^\Gamma= E\}.
$$
It is unknown which pairs of subspaces give rise to a nontrivial face of $\mathbb T$, except for the
case of $m=n=2$. See \cite{ha_kye_04}.

In the above situation, let $D_0$ be the span of the product vectors in (\ref{6-prod}) and $E_0$ the span of their partial conjugates.
Then $\varrho_0$ is an interior point of $\tau(D_0,E_0)$.
We note that $\Delta_5= \tau(D_0,E_0)\cap\mathbb S$, and the state $\rho$ given by (\ref{exttt-ppt})
is an extreme point of $\tau(D_0,E_0)$.
We have shown that the interior of the maximal face $\Delta_4$ of $\Delta_5$ is contained in the interior of the face $\tau(D_0,E_0)$.
If $\{|z_1\rangle,\dots, |z_5\rangle\}$ is linearly dependent and spans a proper subspace of ${\mathcal R}\varrho_0$,
then the above process is not possible. If we take an interior point
$\sigma$ of the face $\Delta_4$ then the line segment from $\varrho_0 $ to $\sigma$ cannot be extended within the convex set $\mathbb T$.
This means that the maximal face $\Delta_4$ of $\Delta_5$ lies on the boundary of the face $\tau(D_0,E_0)$ of $\mathbb T$.
We summarize the above discussion in general situations.

\begin{theorem}\label{main}
Suppose that a finite family ${\mathcal P}$ of product vectors in $\mathbb C^m\ot\mathbb C^n$
gives rise to the simplicial face $\mathcal C_{\mathcal P}$ of $\mathbb S$.  Then,
for a subfamily ${\mathcal Q}$ of ${\mathcal P}$, the following are equivalent:
\begin{enumerate}
\item[(i)]
$\spa{\mathcal Q}=\spa{\mathcal P}$   and $\spa{\bar{\mathcal Q}}=\spa{\bar{\mathcal P}}$, where $\bar{\mathcal P}$ denotes the set of
partial conjugates of members in $\mathcal P$,
\item[(ii)]
$\inte C_{\mathcal Q}\subset \inte\tau(\spa{\mathcal P},\spa{\bar{\mathcal P}})$,
\item[(iii)]
A line segment from an interior point of $C_{\mathcal P}$ to an interior point of $C_{\mathcal Q}$ can be extended within the convex set
$\mathbb T$, to get PPT entangled states.
\end{enumerate}
\end{theorem}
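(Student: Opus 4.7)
I will prove the cycle (i)$\Rightarrow$(ii)$\Rightarrow$(iii)$\Rightarrow$(i), relying throughout on the characterization $\inte\tau(D,E)=\{\rho\in\mathbb T:\mathcal R\rho=D,\ \mathcal R\rho^\Gamma=E\}$ recalled in this section and on Proposition~\ref{prel}.

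For (i)$\Rightarrow$(ii): any $\sigma\in\inte C_{\mathcal Q}$ is a strict convex combination of $\{|z\rangle\langle z|:z\in\mathcal Q\}$, so $\mathcal R\sigma=\spa{\mathcal Q}$ and $\mathcal R\sigma^\Gamma=\spa{\bar{\mathcal Q}}$; hypothesis (i) upgrades these to $\spa{\mathcal P}$ and $\spa{\bar{\mathcal P}}$, whence $\sigma\in\inte\tau(\spa{\mathcal P},\spa{\bar{\mathcal P}})$. For (ii)$\Rightarrow$(iii): fix $\varrho_0\in\inte C_{\mathcal P}$; the same range computation places $\varrho_0$ in $\inte\tau(\spa{\mathcal P},\spa{\bar{\mathcal P}})$, and by (ii) so does $\sigma$. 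Because two points in the relative interior of a convex set admit an extension of the segment joining them past either endpoint, there is $\epsilon>0$ with $\rho_\epsilon:=(1+\epsilon)\sigma-\epsilon\varrho_0$ still in the face $\tau(\spa{\mathcal P},\spa{\bar{\mathcal P}})\subset\mathbb T$. If $\rho_\epsilon$ were separable, then $\mathcal R\rho_\epsilon\subset\spa{\mathcal P}$ together with condition (B) of Proposition~\ref{prel} would place $\rho_\epsilon$ in $C_{\mathcal P}$; the linear independence in condition (A) then makes the expansion in $\{|z\rangle\langle z|:z\in\mathcal P\}$ unique and identifies the coefficient of $|z\rangle\langle z|$ for $z\in\mathcal P\setminus\mathcal Q$ as $-\epsilon\nu_z<0$ (where $\nu_z>0$ is the coefficient of $z$ in $\varrho_0$), a contradiction. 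In the borderline case $\mathcal Q=\mathcal P$ the same mechanism applies after pushing $\epsilon$ past the first value at which some $(1+\epsilon)\alpha_z-\epsilon\nu_z$ turns negative.

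For (iii)$\Rightarrow$(i) I argue by contrapositive. Suppose $\spa{\mathcal Q}\subsetneq\spa{\mathcal P}$ and pick a nonzero $|v\rangle\in\spa{\mathcal P}\cap(\spa{\mathcal Q})^\perp$. Then $\langle v|\sigma|v\rangle=0$, while $\langle v|\varrho_0|v\rangle=\sum_{z\in\mathcal P}\nu_z|\langle v|z\rangle|^2>0$: vanishing of this sum would force the nonzero vector $|v\rangle\in\spa{\mathcal P}$ to be orthogonal to all of $\spa{\mathcal P}$. Consequently $\langle v|\rho_\epsilon|v\rangle=-\epsilon\langle v|\varrho_0|v\rangle<0$ for every $\epsilon>0$, ruling out any extension inside $\mathbb T$. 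The case $\spa{\bar{\mathcal Q}}\subsetneq\spa{\bar{\mathcal P}}$ follows identically after taking partial transposes.

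The one genuine obstacle is the entanglement clause inside (ii)$\Rightarrow$(iii): one has to exclude that $\rho_\epsilon$ slips back into $\mathbb S$ via some separable decomposition using product vectors outside $\mathcal P$. This is precisely what condition (B) of Proposition~\ref{prel} is engineered to forbid, and coupling it with the uniqueness supplied by condition (A) is what isolates the explicit negative coefficient; all other steps reduce to range tracking and quadratic-form evaluations on carefully chosen test vectors.
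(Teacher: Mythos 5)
Your overall architecture --- the cycle (i)$\Rightarrow$(ii)$\Rightarrow$(iii)$\Rightarrow$(i), the range computation showing interior points of $C_{\mathcal Q}$ land in $\inte\tau(\spa{\mathcal P},\spa{\bar{\mathcal P}})$, and the test-vector evaluation $\langle v|\rho_\epsilon|v\rangle<0$ for the contrapositive of (iii)$\Rightarrow$(i) --- matches the discussion by which the paper arrives at Theorem~\ref{main} (the paper offers no separate formal proof, only the ``summary of the above discussion''). There is, however, one genuine flaw. In (ii)$\Rightarrow$(iii) you invoke condition (B) of Proposition~\ref{prel} to conclude that a separable $\rho_\epsilon$ with range inside $\spa{\mathcal P}$ must lie in $C_{\mathcal P}$. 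Condition (B) is \emph{not} a hypothesis of Theorem~\ref{main}: the theorem assumes only that $C_{\mathcal P}$ is a simplicial face, and the paper remarks immediately after the theorem that condition (B) ``is not necessary to get simplicial faces,'' the motivating example being the maximal face $\Delta_4=C_{\{z_1,\dots,z_5\}}$, whose span contains the sixth product vector $|z_6\rangle$, not parallel to any member of the family. The theorem is meant to apply to exactly such families, so your entanglement argument, as written, does not cover them: a separable $\rho_\epsilon$ could a priori be decomposed using product vectors in $\spa{\mathcal P}$ that lie outside $\mathcal P$.

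The repair is short, and it is precisely the mechanism the paper isolates as the unlabeled Proposition at the end of Section~\ref{indeced_face}: since $\sigma=\tfrac1{1+\epsilon}\,\rho_\epsilon+\tfrac{\epsilon}{1+\epsilon}\,\varrho_0$ exhibits $\sigma\in C_{\mathcal P}$ as a proper convex combination of two separable states, the \emph{face} property of $C_{\mathcal P}$ in $\mathbb S$ already forces $\rho_\epsilon\in C_{\mathcal P}$; your negative-coefficient argument, which correctly uses only the simplex structure (condition (A)), then gives the contradiction. With that substitution the step is sound. Two smaller points: the borderline case $\mathcal Q=\mathcal P$ is only gestured at, and whether one can continue within $\mathbb T$ after the first coefficient vanishes again hinges on the spanning condition (i) for the residual subfamily, so it deserves the same care; and in (iii)$\Rightarrow$(i) you tacitly read ``extended'' as extension beyond the $C_{\mathcal Q}$-endpoint, which is the paper's intended reading (extension beyond the $C_{\mathcal P}$-endpoint stays separable for small $\epsilon$) but is worth stating explicitly.
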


In Proposition \ref{prel}, we note that the condition (B) is not necessary
to get simplicial faces,
by considering the face $\Delta_4$ in the above discussion.   In Section \ref{indeced_face},   we consider those cases
in a systematic way. Theorem \ref{main} tells us that $\spa{\mathcal Q}\subsetneqq\spa{\mathcal P}$ if and only if
$C_{\mathcal Q}$ is on the boundary of $\tau(\spa{\mathcal P},\spa{\bar{\mathcal P}})$.

We note that if six product vectors (\ref{6-prod}) are in general position then they satisfy  the condition (A)
by Proposition \ref{prop:simplex}. Furthermore, every five of them are also in general
position, and so satisfy the condition (i) by Proposition \ref{prop:prod_vec}.
Therefore, the simplex $\Delta_5$ is inscribed in the face $\tau(D_0,E_0)$ of $\mathbb T$.
The boundary of $\tau(D_0,E_0)$ consists of
fifteen $3$-simplices from the boundary of $\Delta_5$
and extreme PPT entangled edge states of rank four. We also note that interior points of the $3$-simplices
are separable states of rank four.
Unfortunately, the authors could not determine if every generalized unextendible product basis
consisting of six product vectors in $\mathbb C^3\ot\mathbb C^3$ is in general position or not,
when they span a $5$-dimensional space.

\begin{figure}[h!]
\begin{center}
\includegraphics[scale=0.8]{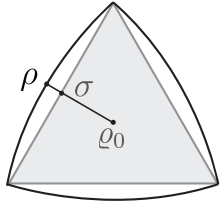}
\end{center}
\caption{The triangle, edges, vertices represent the $5$-simplex, $4$-simplices and $3$-simplices consisting of separable states, respectively.
The triangle is inscribed in the round convex body $\tau(D_0,E_0)$ so that the vertices
are on the boundary of $\tau(D_0,E_0)$. The boundary points
of $\tau(D_0,E_0)$ which is not on the vertices represent PPT entangled edge states of rank four. }
\end{figure}

Now, we take two points $\rho_1,\rho_2$ on the boundary of $\tau(D_0,E_0)$ which are located outside of different maximal faces.
Then it is clear that the line segment between $\rho_1$ and $\rho_2$ touches the simplex $\Delta_5$.
Since $\rho_1$ and $\rho_2$ are extreme PPT states, we see that a convex combination
of two extreme PPT entangled states may be separable. This gives an affirmative answer to Problem 1 of \cite{chen_dj_ext_PPT}.
Actually, we have a stronger result, as we will see in Corollary \ref{prop:sep}.

So far, we have seen that six product vectors in general position which span a $5$-dimensional space give rise to
PPT entangled edge states of rank four by (\ref{exttt-ppt}). We proceed to show that every edge state $\rho$ of rank four arises in this way.
To do this, we first note that
$\ker\rho$ has exactly six product vectors which are in general position by \cite{chen,sko}.
Therefore, we have the following:

\begin{proposition}\label{ortho}
For any $3\otimes 3$ PPT entangled edge state $\rho$ of rank four, there exists a $3\otimes 3$ PPT entangled edge state of rank four
whose range is orthogonal to the range of $\rho$.
\end{proposition}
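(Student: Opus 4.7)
The plan is to apply the construction of equation (\ref{exttt-ppt}) to the kernel of $\rho$ itself, which will produce the desired edge state with orthogonal range. By the Chen--Djokovi\'c and Skowronek result already cited, the $5$-dimensional space $\ker\rho$ contains exactly six product vectors $\{|z_i\rangle=|x_i\otimes y_i\rangle:i=1,2,\dots,6\}$ which are in general position. Since any five of them are linearly independent by Proposition~\ref{prop:prod_vec}, they span $\ker\rho$, and by Proposition~\ref{3x3} the corresponding pure product states form a simplicial face $\Delta_5$ of $\mathbb S$.

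The key preparatory step, and the point where I expect the main technical work, is to check that the partial conjugates $\{|\bar x_i\otimes y_i\rangle\}$ likewise span a $5$-dimensional space, namely $\ker\rho^\Gamma$. My approach is to note that $\rho^\Gamma$ is itself a $3\ot 3$ PPT entangled edge state of rank four, apply the same structure theorem of \cite{chen,sko} to $\rho^\Gamma$ to obtain its six product vectors in general position in the $5$-dimensional kernel $\ker\rho^\Gamma$, and then match these with the partial conjugates $|\bar x_i\otimes y_i\rangle$, each of which automatically lies in $\ker\rho^\Gamma$. With this symmetric picture established, the state $\varrho_0:=\frac 16\sum_i|z_i\ran\lan z_i|$ is an interior point of $\tau(\ker\rho,\ker\rho^\Gamma)$, and both of its ranks equal five.

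Finally, I would choose any five of the six product vectors, pick an interior point $\sigma$ of the corresponding maximal face $\Delta_4\subset\Delta_5$, and define $\rho':=\sigma-\epsilon\varrho_0$ with $\epsilon>0$ taken maximal subject to $\rho'$ remaining PPT. This is exactly the construction carried out before equation (\ref{exttt-ppt}), whose analysis in this section already delivers that $\rho'$ is a PPT entangled edge state of rank four. Since ${\mathcal R}\sigma$ and ${\mathcal R}\varrho_0$ are both contained in $\spa\{|z_i\rangle\}=\ker\rho$, we have ${\mathcal R}\rho'\subset\ker\rho=({\mathcal R}\rho)^\perp$, so the range of $\rho'$ is orthogonal to the range of $\rho$, as required.
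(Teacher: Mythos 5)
Your proposal is correct and follows essentially the same route as the paper: the paper also obtains the new edge state by taking the six product vectors in general position in the $5$-dimensional space $\ker\rho$ (via \cite{chen,sko}), noting that their partial conjugates span the $5$-dimensional $\ker\rho^\Gamma$, and applying the construction (\ref{exttt-ppt}) inside $\ker\rho=({\mathcal R}\rho)^\perp$. Your extra care in justifying that the partial conjugates span a $5$-dimensional space is a detail the paper asserts with less elaboration, but it is the same argument.
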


Now, let $\rho$ be a $3\otimes 3$ PPT entangled edge state of rank four.
By Proposition \ref{ortho}, there exists a PPT edge state $\sigma$ of rank four whose range is contained in $\ker\rho$.
Then $\ker\sigma$ again has exactly six product vectors $\{|x_i\otimes y_i\rangle:i=1,2,\dots 6\}$ in general position
which generate $\ker\sigma$. Note that $\ker \sigma$ contains the range of $\rho$.

We note that the following relation
$$
\text{\rm Tr}[\rho\, \sigma]=\text{\rm Tr}[(\rho\, \sigma)^{\Gamma}]
=\text{\rm Tr}[\rho^{\Gamma} \sigma^{\Gamma}]=\text{\rm Tr}[\sigma^{\Gamma}\rho^{\Gamma}]
$$
holds for PPT states $\rho$ and $\sigma$. We also note that if $\tr(\rho\sigma)=0$ for positive $\rho$ and $\sigma$ then $\rho\sigma=0$. Therefore,
we may conclude that if the range of $\rho$ is contained in $\ker\sigma$ then the range of $\rho^\Gamma$ is also contained in
$\ker\sigma^\Gamma$, as it was already observed in \cite{hlvc}.
We also note that the easy relation
$$
|x\otimes y\rangle \in\ker\sigma\ \Longleftrightarrow |\bar x\otimes y\rangle\in\ker\sigma^\Gamma
$$
holds for PPT states $\sigma$. See \cite{2xn}.
Therefore, we conclude that $\text{\rm Ker}(\sigma^{\Gamma})$
is spanned by six product vectors $\{|\bar x_i\otimes y_i\rangle:1,2,\dots, 6\}$, since they are also in general position.
Now, the face $\tau(\ker\sigma,\ker\sigma^\Gamma)$ of $\mathbb T$ contains the $5$-simplex $\Delta_5$
which is the convex hull of six product vectors $|x_i\otimes y_i\rangle$. We also note that
interior of $\Delta_5$ is contained in the interior of the face $\tau(\ker\sigma,\ker\sigma^\Gamma)$, and
$\rho$ is on the boundary of the face $\tau(\ker\sigma,\ker\sigma^\Gamma)$.
Therefore, we have shown that every PPT entangled edge state arises from six product vectors in general position
spanning $5$-dimensional space, and is of the form (\ref{exttt-ppt}).

\begin{theorem}\label{thm:difference}
Every two-qutrit PPT entangled edge state of rank four is the difference of scalar multiples of two separable states with rank five.
\end{theorem}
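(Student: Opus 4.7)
My plan is to turn the geometric picture built up between Theorem~\ref{main} and the present statement into a concrete algebraic identity. Given a two-qutrit PPT entangled edge state $\rho$ of rank four, the first step is to apply Proposition~\ref{ortho} and obtain a partner PPT entangled edge state $\sigma$ of rank four with $\mathcal R\sigma\subset\ker\rho$. By the results of Chen and Skowronek invoked in the preceding discussion, $\ker\sigma$ is spanned by six product vectors $|z_i\rangle=|x_i\ot y_i\rangle$ in general position and $\ker\sigma^\Gamma$ is spanned by the partial conjugates. I would then form the separable rank-five state
\[
\varrho_0=\frac{1}{6}\sum_{i=1}^{6}|z_i\rangle\langle z_i|,
\]
which lies in the relative interior of the simplex $\Delta_5=\conv\{|z_i\rangle\langle z_i|\}$ and, by the inscribed-simplex picture, in the relative interior of the face $\tau(\ker\sigma,\ker\sigma^\Gamma)$ of $\mathbb T$.

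The main construction is to translate $\rho$ toward $\varrho_0$. Since $\rho$ has rank four while interior points of $\tau(\ker\sigma,\ker\sigma^\Gamma)$ have rank five, $\rho$ lies on the boundary of that face. Assuming the common affine hull of $\Delta_5$ and $\tau(\ker\sigma,\ker\sigma^\Gamma)$, a key input from the inscribed-simplex discussion, the one-parameter family $\omega_t:=(1-t)\rho+t\varrho_0$ stays within this hull as $t$ moves from $0$ to $1$. At $t=0$ the state $\omega_0=\rho$ lies on the boundary of $\tau$ and outside $\Delta_5$, while at $t=1$ the state $\omega_1=\varrho_0$ lies in the relative interior of $\Delta_5$. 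By continuity there is a first $t^\ast\in(0,1)$ with $\omega_{t^\ast}\in\Delta_5$, and for each $t\in(t^\ast,1)$ the state $\omega_t$ lies in the relative interior of $\Delta_5$, hence is a separable state of rank five whose decomposition is supported on the six product vectors.

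Picking any such $t$ and writing $\sigma':=\omega_t$, I would rearrange $\omega_t=(1-t)\rho+t\varrho_0$ to obtain
\[
\rho=\frac{1}{1-t}\,\sigma'-\frac{t}{1-t}\,\varrho_0,
\]
which is the desired expression of $\rho$ as a difference of positive scalar multiples of two separable rank-five states, namely $\sigma'$ and $\varrho_0$. The main obstacle is the rigorous justification of the inscribed-simplex picture, specifically the claim that $\tau(\ker\sigma,\ker\sigma^\Gamma)$ shares its affine hull with $\Delta_5$; without this, the segment $\{\omega_t\}$ could leave the affine hull of $\Delta_5$ and never meet it, making the separable state $\sigma'$ inaccessible along this line. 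This dimensional coincidence is the crux of the preceding discussion and is driven by the general-position hypothesis on the six $|z_i\rangle$ together with Theorem~\ref{main}, which places the interiors of the maximal $\Delta_4$-faces inside the relative interior of $\tau(\ker\sigma,\ker\sigma^\Gamma)$.
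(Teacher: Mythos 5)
Your outline is the paper's own: construct the orthogonal partner $\sigma$ via Proposition \ref{ortho}, pass to the six product vectors in general position spanning $\ker\sigma$ and their partial conjugates spanning $\ker\sigma^\Gamma$, and recover $\rho$ by running the segment from $\rho$ to the center $\varrho_0$ of $\Delta_5$. The final rearrangement is correct, and taking $t\in(t^\ast,1)$ rather than the exit point $t^\ast$ is a harmless (in fact slightly more robust) variation, since relative interior points of $\Delta_5$ carry all six product vectors with positive weights and hence have rank five.

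The gap sits exactly where you flag it, but the justification you offer for closing it does not work. "By continuity there is a first $t^\ast\in(0,1)$ with $\omega_{t^\ast}\in\Delta_5$" is not a consequence of continuity: if $\rho$ fails to lie in the \emph{real affine span} of the six pure product states $|z_i\rangle\langle z_i|$, then $\omega_t=(1-t)\rho+t\varrho_0$ lies outside that affine span for every $t<1$, hence outside $\Delta_5=\mathbb S\cap\tau(\ker\sigma,\ker\sigma^\Gamma)$, and the segment meets $\Delta_5$ only at $t=1$, making your interval $(t^\ast,1)$ empty. Theorem \ref{main} cannot supply the missing fact: it locates $\inte C_{\mathcal Q}$ inside $\inte\tau(\spa\mathcal P,\spa\bar{\mathcal P})$ when the spans agree, which is perfectly compatible with $\tau$ having affine dimension strictly greater than five, in which case $\rho$ on the boundary of $\tau$ need not be an affine combination of the $|z_i\rangle\langle z_i|$. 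What is actually required is that every Hermitian $X$ with ${\mathcal R}X\subset\ker\sigma$ and ${\mathcal R}X^\Gamma\subset\ker\sigma^\Gamma$ --- in particular $\rho$ itself --- is a real linear combination $\sum_i c_i|z_i\rangle\langle z_i|$; from this one takes $\epsilon=-6\min_i c_i>0$ (some $c_i$ must be negative since $\rho$ is entangled) and obtains $\rho=\sigma'-\epsilon\varrho_0$ in the form (\ref{exttt-ppt}). This six-dimensionality statement is the real content of the inscribed-simplex picture and comes from the structure theory of rank-four edge states in \cite{chen,sko,hhms}, not from Theorem \ref{main}; the paper itself leans on it only implicitly, so your reconstruction mirrors the published argument faithfully but attributes its crux to the wrong source and leaves it unproved.
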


We also have the following corollary in the relation with the question in \cite{chen_dj_ext_PPT} mentioned before.

\begin{corollary}\label{prop:sep}
For any $3\otimes 3$ PPT entangled edge state $\rho_1$ of rank four, there exists a PPT entangled edge state $\rho_2$
of rank four such that
parts of the line segment between $\rho_1$ and $\rho_2$ are separable.
\end{corollary}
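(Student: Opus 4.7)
The strategy is to place $\rho_1$ into the geometric picture of this section via Theorem~\ref{thm:difference}, construct a companion rank-four PPT entangled edge state $\rho_2$ associated to a \emph{different} maximal face of the inscribed $\Delta_5$, and then verify that the segment between them meets the simplex. Concretely, Theorem~\ref{thm:difference} furnishes six product vectors $\{|z_i\rangle = |x_i\otimes y_i\rangle : i = 1, \ldots, 6\}$ in general position spanning a 5-dimensional subspace $D_0$, such that $\rho_1$ is an extreme point of the face $F := \tau(D_0, E_0)$ of $\mathbb T$ with representation $\rho_1 = (\sigma_1 - \epsilon_1 \varrho_0)/(1-\epsilon_1)$ from \eqref{exttt-ppt}, where $\sigma_1 \in \inte \Delta_4^{(1)}$ for some maximal face $\Delta_4^{(1)}$ of $\Delta_5$, $\epsilon_1 \in (0,1)$, and $\varrho_0 = \frac{1}{6}\sum_i |z_i\rangle\langle z_i|$.

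I would then pick any other maximal face $\Delta_4^{(2)} \neq \Delta_4^{(1)}$ (five choices are available) and an interior point $\sigma_2 \in \inte \Delta_4^{(2)}$, and apply the construction \eqref{exttt-ppt} to $\sigma_2$: let $\epsilon_2 > 0$ be the maximum value for which $\rho_2 := (\sigma_2 - \epsilon_2 \varrho_0)/(1-\epsilon_2)$ is PPT. The same analysis used for $\rho_1$ shows that $\rho_2$ is a PPT entangled edge state of rank four. Its rank is less than five by the maximality of $\epsilon_2$, at least four since PPT states of rank at most three are separable, and $\rho_2 \notin \Delta_5$ because Proposition~\ref{3x3} makes its expansion in $\{|z_i\rangle\langle z_i|\}$ unique while the coefficient on the vertex of $\Delta_5$ omitted by $\Delta_4^{(2)}$ is strictly negative; combined with the fact that $D_0$ admits only the six product vectors $|z_i\rangle$, this forces $\rho_2$ to be entangled.

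The last step is to show that the segment $\mu(t) := (1-t)\rho_1 + t\rho_2$ enters $\Delta_5$ for suitable $\sigma_2$ and $t \in (0,1)$. Expanding $\mu(t) = \sum_{i=1}^6 c_i(t)|z_i\rangle\langle z_i|$ in the same linearly independent system, membership in $\Delta_5$ becomes the system of six affine inequalities $c_i(t) \geq 0$. Two of these are binding, corresponding to the vertices of $\Delta_5$ omitted by $\Delta_4^{(1)}$ and $\Delta_4^{(2)}$ respectively, and impose opposing bounds $t \geq t_-$ and $t \leq t_+$; the remaining four are mild. Any $t \in [t_-, t_+]$ would then place $\mu(t)$ in $\Delta_5 \subset \mathbb{S}$, yielding a separable point on the segment.

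The main obstacle is the feasibility check $t_- \leq t_+$, because $\epsilon_1$ is determined by the given $\rho_1$ and cannot be adjusted. A direct computation in the 2-plane through $\sigma_1, \sigma_2, \varrho_0$ reduces this to an elementary inequality relating $\epsilon_1$, $\epsilon_2$, and the weight profiles of $\sigma_1, \sigma_2$, which can be arranged by taking $\sigma_2$ sufficiently close to the relative boundary of $\Delta_4^{(2)}$ (thereby shrinking $\epsilon_2$, while keeping $\rho_2$ a rank-four PPT edge state at the critical value). The geometric content is exactly the assertion stated (as ``clear'') in the paragraph following Theorem~\ref{main} and illustrated in Figure~1: two boundary points of $F$ lying outside distinct maximal faces of the inscribed $\Delta_5$ are joined by a segment which crosses $\Delta_5$.
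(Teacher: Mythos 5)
Your overall strategy is the paper's: realize $\rho_1$ as a boundary point of the face $F=\tau(D_0,E_0)$ lying ``outside'' one maximal face of the inscribed $\Delta_5$, manufacture a companion $\rho_2$ from a different maximal face, and argue that the segment between them meets $\Delta_5$. The paper only asserts the last step as ``clear,'' so your attempt to verify it by a barycentric sign analysis is the right instinct --- but that analysis has a genuine gap. Writing $\rho_1=\sum_i c_i^{(1)}|z_i\rangle\langle z_i|$, all you know is that $c_1^{(1)}<0$ and $\sum_i c_i^{(1)}=1$; since $c_j^{(1)}=(a_j-\epsilon_1/6)/(1-\epsilon_1)$ and the weights $a_j$ of $\sigma_1$ are dictated by the given $\rho_1$, the state $\rho_1$ may have \emph{several} negative barycentric coordinates. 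Consequently: (i) you cannot pick ``any other maximal face'' --- if the vertex $|z_j\rangle\langle z_j|$ omitted by $\Delta_4^{(2)}$ has $c_j^{(1)}\le 0$, the $j$-th inequality fails for every $t\in[0,1]$ because $c_j^{(2)}<0$ as well; (ii) the ``remaining four'' inequalities are not mild, since every index with $c_k^{(1)}<0$ imposes a further constraint on $t$; and (iii) your remedy of pushing $\sigma_2$ toward the relative boundary of $\Delta_4^{(2)}$ forces some weight $b_k\to 0$ and hence threatens to make $c_k^{(2)}=(b_k-\epsilon_2/6)/(1-\epsilon_2)$ negative, creating new violated constraints rather than removing old ones. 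The ``elementary inequality'' you defer to is precisely the point requiring proof, and it fails for some admissible choices of $(\Delta_4^{(2)},\sigma_2)$.

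The missing idea that makes the corollary immediate is collinearity: do not choose $\sigma_2$ freely. Since $\rho_1$ is an affine combination of $\sigma_1$ and $\varrho_0$, it lies in the affine span of $\Delta_5$; extend the ray from $\rho_1$ through $\varrho_0$ (or, to avoid the degenerate case where the ray leaves $\Delta_5$ through one of the fifteen $\Delta_3$'s lying on $\partial F$, through a generically chosen interior point of $\Delta_5$) until it exits $F$, and let $\rho_2$ be the exit point. Then $[\rho_1,\rho_2]$ contains the nondegenerate chord that this line cuts out of $\Delta_5$, which consists of separable states, while $\rho_2$ lies on $\partial F$ but not in $\Delta_5$, hence is entangled with $\min(\rk\rho_2,\rk\rho_2^\Gamma)\le 4$, hence is a rank-four edge state by \cite{hlvc} and \cite{chen,sko}. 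Equivalently, $\rho_2$ is the state obtained by applying (\ref{exttt-ppt}) to the point where the line leaves $\Delta_5$, which generically is an interior point of another maximal face; this is the reading of Figure 1 the paper intends, and it is exactly what the explicit computation of $\sigma(t)=t\sigma_i+(1-t)\sigma_j$ following the corollary illustrates in the symmetric example.
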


Now, suppose that $\rho$ is a PPT state of rank five. In generic cases,
the range of $\rho$ has exactly six product vectors $\{|x_i\ot y_i\rangle\}$ which determines a simplicial face $\Delta_5$.
It is clear that $\rho$ is separable if and only if the linear equation
$$
\sum_{i=1}^6 \lambda_i |x_i\ot y_i\rangle\langle x_i\ot y_i|=\rho,\qquad \sum_{i=1}^6\lambda_i=1
$$
has a solution $(\lambda_1,\dots,\lambda_6)$ with $0\le\lambda_i\le 1$.
If this is not the case, we consider the linear equation
$$
\sum_{i=1}^6 \lambda_i |x_i\ot y_i\rangle\langle x_i\ot y_i|
=(1-\mu)\varrho_0+\mu\rho,\qquad \sum_{i=1}^6\lambda_i=1,\qquad 0\le\lambda_i,\mu\le 1
$$
with respect to real unknowns $\lambda_i$ and $\mu$.
If $\varrho$ is expressed as the linear combination of pure product states, then
this equation has infinitely many solutions. But, it has a unique solution
$(\lambda_1,\dots,\lambda_6,\mu)$ under the condition that exactly
one $\lambda_i$ is zero. This solution represents the maximal face
$\Delta_4$ which is the nearest from $\rho$ among six maximal faces
of $\Delta_5$. This suggests a method to find the nearest separable
state to $\rho$, but we could not determine if this is really the
nearest face among all faces of $\mathbb S$.

We illustrate the above discussion with the PPT state  $\rho$ defined by
\[
\rho = \frac 1{3(1+b+1/b)}\begin{pmatrix}
1 & \cdot & \cdot & \cdot & 1 & \cdot & \cdot & \cdot & 1\\
\cdot & b & \cdot & 1 & \cdot & \cdot & \cdot & \cdot & \cdot\\
\cdot & \cdot & 1/b & \cdot & \cdot & \cdot & 1 & \cdot & \cdot\\
\cdot & 1 & \cdot & 1/b & \cdot & \cdot & \cdot & \cdot & \cdot \\
1 & \cdot & \cdot & \cdot & 1 & \cdot & \cdot & \cdot & 1\\
\cdot & \cdot & \cdot & \cdot & \cdot & b & \cdot & 1 & \cdot \\
\cdot & \cdot & 1 & \cdot & \cdot & \cdot & b & \cdot & \cdot\\
\cdot & \cdot & \cdot & \cdot & \cdot & 1 & \cdot & 1/b &\cdot \\
1 & \cdot & \cdot & \cdot & 1 & \cdot & \cdot & \cdot & 1
\end{pmatrix},
\]
 where $b$ is a positive real number with $b\neq 1$, and $\cdot$ denotes 0, as it was constructed in \cite{ha_kye_park}.
Note that $\rho$ is a PPT entangled edge state with
$\rk\rho=\rk\rho^{\Gamma}=4$. If $b=2$, then this is just the first example of
$3\ot 3$ PPT entangled state given by Choi \cite{choi-ppt}.
It was also shown in \cite{ha_kye_park} that $\ker\rho$ has only following six product vectors:
\begin{equation}\label{exam-1}
\begin{aligned}
|x_1\otimes y_1\rangle =&(1,\sqrt b,0)^{\text{\rm t}}\otimes (1,-\frac 1 {\sqrt b},0)^{\text{\rm t}},\, &
& |x_2\otimes y_2\rangle =(1,-\sqrt b,0)^{\text{\rm t}}\otimes (1,\frac 1 {\sqrt b},0)^{\text{\rm t}},\\
|x_3\otimes y_3\rangle =&(0,1,\sqrt b)^{\text{\rm t}}\otimes (0,1,-\frac 1 {\sqrt b})^{\text{\rm t}},\, &
& |x_4\otimes y_4\rangle =(0,1,-\sqrt b)^{\text{\rm t}}\otimes (0,1,\frac 1{\sqrt b})^{\text{\rm t}},\\
|x_5\otimes y_5\rangle =&(\sqrt b,0,1)^{\text{\rm t}}\otimes (-\frac 1 {\sqrt b},0,1)^{\text{\rm t}},\, &
& |x_6\otimes y_6\rangle =(-\sqrt b,0,1)^{\text{\rm t}}\otimes (\frac 1 {\sqrt b},0,1)^{\text{\rm t}}.\\
\end{aligned}
\end{equation}
We denote by $|z_i\rangle$ the normalization of product vector $|x_i\otimes y_i\rangle$.
Then it is also easy to see that
$\mathcal P=\{|z_i\rangle :i=1,2,\dots,6\}$
is in general position, and so it satisfies the conditions (A) and (B) of Proposition \ref{prel}.
Furthermore, any five choice of them satisfies the condition (i) of Theorem \ref{main}.
For each $k=1,2,\dots,6$, we denote by $F_k$ the convex hull of
$\{|z_i\rangle \langle z_i|: |z_i\rangle \in \mathcal P, \ i\neq k\}$.
Then these $F_k$'s are maximal faces
which are isomorphic to $\Delta_4$ with the center
$\rho_k=\frac 15\sum_{\underset{i\neq k}{i=1}}^6 |z_i\rangle \langle z_i|$ of rank five.
We can show that
$$
\text{\rm max}\{\epsilon : \rho_k-\epsilon \varrho_0\in \mathbb T\}=\frac 15,\qquad k=1,2,\dots,6,
$$
where $\varrho_0=\frac1 6\sum_{i=1}^6 |z_i\rangle \langle z_i|$.
Therefore, we see that
$$
\sigma_k=\frac 54 (\rho_k-\frac 15 \varrho_0),\qquad k=1,2,\dots,6
$$
is a PPT entangled edge state of rank four which is the difference
of separable states $\varrho_0$ and $\rho_k$ of rank five.
Note that the range space of $\sigma_k$ is orthogonal to the range space of $\rho$.
We also have
\[
\sigma(t):=t \sigma_i+(1-t)\sigma_j
=\frac {5-6t}{24}|z_i\rangle \langle z_i|+\frac{6t-1}{24}|z_j\rangle \langle z_j|
+\frac{5}{24}\sum_{\underset{k\neq i,\,j}{k=1}}^6|z_k\rangle \langle z_k|.
\]
 By considering the coefficients of $|z_i\rangle \langle z_i|$ and $|z_j\rangle \langle z_j|$,
we see that the state $\sigma(t)$ on the line segment between $\sigma_i$ and $\sigma_j$
is separable if $\frac 16\le t\le \frac 56$. The converse is also ture,
because $\sigma(\frac 16)$ and $\sigma(\frac 56)$ are interior points of maximal faces $F_j$ and $F_i$ of $\mathbb S$, respectively.

We denote by $\{|i\rangle:i=1,2,3\}$ the usual orthonormal basis of $\mathbb C^3$.
If we replace the vector $|x_1\rangle$ in ({\ref{exam-1}) by $|x_1\rangle=|3\rangle$ then it is
straightforward to see that they form a generalized unextendible product vectors. It is cleat that they
are not in general position. It is also easy to see that they span the $6$-dimensional space.
It would be interesting to know whether a generalized unextendible product vectors consisting of six vectors in
$\mathbb C^3\ot\mathbb C^3$ are in general position
when they span the $5$-dimensional space.

\section{Simplicial faces of asymmetric type and optimality of decomposable entanglement witnesses}\label{SPA}

So far, we have considered the five dimensional subspaces which have exactly six product vectors (\ref{6-prod})
whose partial conjugates also span the five dimensional spaces. In this section, we consider the case when the partial
conjugates of (\ref{6-prod}) span the six-dimensional space.
As a byproduct, we construct
PPT states of type $(9,5)$. We say that a PPT state $\varrho$ is of type $(p,q)$ if $\rk\varrho=p$ and $\rk\varrho^\Gamma=q$.

We begin with the $3\ot 3$ PPT entangled edge states $\varrho_{\theta}$ of type $(5,5)$ constructed in \cite{kye_osaka}.
For a fixed positive real number $b$ with $b\neq 1$,   define
$$
\varrho_{\theta}=
\left(
\begin{array}{ccccccccccc}
e^{i\theta}+e^{-i\theta}     &\cdot   &\cdot  &\cdot  &-e^{i\theta}     &\cdot   &\cdot   &\cdot  &-e^{-i\theta}    \\
\cdot   &\frac 1b &\cdot    &-e^{-i\theta}    &\cdot   &\cdot &\cdot &\cdot     &\cdot   \\
\cdot  &\cdot    &b &\cdot &\cdot  &\cdot    &-e^{i\theta}    &\cdot &\cdot  \\
\cdot  &-e^{i\theta}    &\cdot &b &\cdot  &\cdot    &\cdot    &\cdot &\cdot  \\
-e^{-i\theta}     &\cdot   &\cdot  &\cdot  &e^{i\theta}+e^{-i\theta}      &\cdot   &\cdot   &\cdot  &-e^{i\theta}     \\
\cdot   &\cdot &\cdot    &\cdot    &\cdot   &\frac 1b &\cdot &-e^{-i\theta}    &\cdot   \\
\cdot   &\cdot &-e^{-i\theta}    &\cdot    &\cdot   &\cdot &\frac 1b &\cdot    &\cdot   \\
\cdot  &\cdot    &\cdot &\cdot &\cdot  &-e^{i\theta}    &\cdot    &b &\cdot  \\
-e^{i\theta}     &\cdot   &\cdot  &\cdot  &-e^{-i\theta}     &\cdot   &\cdot   &\cdot  &e^{i\theta}+e^{-i\theta}
\end{array}
\right),
$$
where  $-\frac\pi 3<\theta<\frac\pi 3$ and $\theta\neq 0$.
First of all, we note that the kernel of $\varrho_\theta$ is spanned by the following four vectors
\begin{equation}\label{kernel}
\begin{aligned}
|w_1\rangle=&(1,0,0\,;\,0,1,0\,;\,0,0,1)^{\text{\rm t}},\\
|w_2\rangle=&(0,b,0\,;\,e^{i\theta},0,0\,;\,0,0,0)^{\text{\rm t}},\\
|w_3\rangle=&(0,0,0\,;\,0,0,b\,;\,0,e^{i\theta},0)^{\text{\rm t}},\\
|w_4\rangle=&(0,0,e^{i\theta}\,;\,0,0,0\,;\,b,0,0)^{\text{\rm t}}.
\end{aligned}
\end{equation}
From this, it is easy to see that the $5$-dimensional range space of $\varrho_\theta$ has exactly six product vectors, and
the kernel has no product vectors.
We list up all of them with the temporary notation $\omega=\sqrt b \,e^{i\frac{\theta}2}$:
\begin{equation}\label{kernel--}
\begin{aligned}
|z_1\rangle=(1,\sss,0)^{\text{\rm t}}\otimes (\sss,-1,0)^{\text{\rm t}}&=(\sss,-1,0 \,;\, \sss^2,-\sss,0 \,;\, 0,0,0\,)^{\text{\rm t}},\\
|z_2\rangle=(-1,\sss,0)^{\text{\rm t}}\otimes (\sss,1,0)^{\text{\rm t}}&=(-\sss,-1,0 \,;\, \sss^2,\sss,0 \,;\, 0,0,0\,)^{\text{\rm t}},\\
|z_3\rangle=(0,1,\sss)^{\text{\rm t}}\otimes (0,\sss,-1)^{\text{\rm t}}&=(0,0,0\,;\,0,\sss,-1 \,;\, 0,\sss^2,-\sss\,)^{\text{\rm t}},\\
|z_4\rangle=(0,-1,\sss)^{\text{\rm t}}\otimes (0,\sss,1)^{\text{\rm t}}&=(0,0,0\,;\,0,-\sss,-1 \,;\, 0,\sss^2,\sss\,)^{\text{\rm t}},\\
|z_5\rangle=(\sss,0,1)^{\text{\rm t}}\otimes (-1,0,\sss)^{\text{\rm t}}&=(-\sss,0,\sss^2 \,;\, 0,0,0\,;\,-1,0,\sss\,)^{\text{\rm t}},\\
|z_6\rangle=(\sss,0,-1)^{\text{\rm t}}\otimes (1,0,\sss)^{\text{\rm t}}&=(\sss,0,\sss^2 \,;\, 0,0,0\,;\,-1,0,-\sss\,)^{\text{\rm t}}.\\
\end{aligned}
\end{equation}
Define the separable state $\varrho_{\rm sep}$ by
$$
\varrho_{\rm sep}
=\dfrac 1{2b}\sum_{i=1}^6|z_i\rangle\langle z_i|
=
\left(
\begin{array}{ccccccccccc}
2     &\cdot   &\cdot  &\cdot  &-1     &\cdot   &\cdot   &\cdot  &-1    \\
\cdot   &\frac 1b &\cdot    &-e^{-i\theta}    &\cdot   &\cdot &\cdot &\cdot     &\cdot   \\
\cdot  &\cdot    &b &\cdot &\cdot  &\cdot    &-e^{i\theta}    &\cdot &\cdot  \\
\cdot  &-e^{i\theta}    &\cdot &b &\cdot  &\cdot    &\cdot    &\cdot &\cdot  \\
-1     &\cdot   &\cdot  &\cdot  &2     &\cdot   &\cdot   &\cdot  &-1     \\
\cdot   &\cdot &\cdot    &\cdot    &\cdot   &\frac 1b &\cdot &-e^{-i\theta}    &\cdot   \\
\cdot   &\cdot &-e^{-i\theta}    &\cdot    &\cdot   &\cdot &\frac 1b &\cdot    &\cdot   \\
\cdot  &\cdot    &\cdot &\cdot &\cdot  &-e^{i\theta}    &\cdot    &b &\cdot  \\
-1     &\cdot   &\cdot  &\cdot  &-1     &\cdot   &\cdot   &\cdot  &2
\end{array}
\right).
$$
It is clear that $\varrho_{\rm sep}$ is of type $(5,6)$ and has a unique decomposition by Proposition \ref{3x3}
and Proposition \ref{prel}. In this way, we have separable states with unique decompositions which are of asymmetric type.

Now, we consider the PPT state defined by
$$
\varrho=\frac 12(\varrho_{\rm sep}+\varrho_\theta).
$$
It is clear that $\varrho$ is of rank five. It is also easy to see that the partial transpose $\varrho^\Gamma$
$$
\left(
\begin{array}{ccccccccccc}
1+\cos\theta     &\cdot   &\cdot  &\cdot  &-e^{i\theta}     &\cdot   &\cdot   &\cdot  &-e^{-i\theta}    \\
\cdot   &\frac 1b &\cdot    &-\frac{1+e^{-i\theta}}2    &\cdot   &\cdot &\cdot &\cdot     &\cdot   \\
\cdot  &\cdot    &b &\cdot &\cdot  &\cdot    &-\frac{1+e^{i\theta}}2    &\cdot &\cdot  \\
\cdot  &-\frac{1+e^{i\theta}}2    &\cdot &b &\cdot  &\cdot    &\cdot    &\cdot &\cdot  \\
-e^{-i\theta}     &\cdot   &\cdot  &\cdot  &1+\cos\theta    &\cdot   &\cdot   &\cdot  &-e^{i\theta}     \\
\cdot   &\cdot &\cdot    &\cdot    &\cdot   &\frac 1b &\cdot &-\frac{1+e^{-i\theta}}2    &\cdot   \\
\cdot   &\cdot &-\frac{1+e^{-i\theta}}2    &\cdot    &\cdot   &\cdot &\frac 1b &\cdot    &\cdot   \\
\cdot  &\cdot    &\cdot &\cdot &\cdot  &-\frac{1+e^{i\theta}}2    &\cdot    &b &\cdot  \\
-e^{i\theta}     &\cdot   &\cdot  &\cdot  &-e^{-i\theta}     &\cdot   &\cdot   &\cdot  &1+\cos\theta
\end{array}
\right)
$$
is of rank nine. In this way, we have parameterized examples of $3\ot 3$ PPT entangled state $\varrho^\Gamma$ of type $(9,5)$.

Now, we use the duality \cite{eom-kye} between the convex cone
$\mathbb D$ of all decomposable positive linear maps and PPT states,
to consider the dual face of the PPT states $\varrho^\Gamma$ given
by
$$
(\varrho^\Gamma)^\prime
=\{\phi\in\mathbb D:\langle \varrho^\Gamma,\phi\rangle=0\}.
$$
Recall that the bilinear pairing is defined by
$\langle\varrho,\phi\rangle=\tr(\varrho C_\phi^\ttt)$, where
$C_\phi$ is the Choi matrix of the linear map $\phi$. Because
$\varrho^\Gamma$ is of type $(9,5)$, it is clear that the dual face
$(\varrho^\Gamma)^\prime$ consists of completely copositive maps.
Therefore, we conclude that every map in the dual face
$(\varrho^\Gamma)^\prime$ gives rise to a decomposable entanglement
witness. Recall that the kernel of $(\varrho^\Gamma)^\Gamma=\varrho$
is spanned by four vectors in (\ref{kernel}). Therefore,
entanglement witnesses we constructed are the partial transposes of
positive matrices supported on subspaces of $\ker\varrho$.
We consider the entanglement witness $W_\theta$ given by
$$
W_\theta=\left(|w_1\rangle\langle w_1|+\dfrac 1b\sum_{i=2}^4|w_i\rangle\langle w_i|\right)^\Gamma
=\left(
\begin{array}{ccccccccccc}
1   &\cdot   &\cdot  &\cdot  &e^{i\theta}     &\cdot   &\cdot   &\cdot  &e^{-i\theta}    \\
\cdot   &b &\cdot    &1    &\cdot   &\cdot &\cdot &\cdot     &\cdot   \\
\cdot  &\cdot    &\frac 1b &\cdot &\cdot  &\cdot    &1    &\cdot &\cdot  \\
\cdot  &1   &\cdot &\frac 1b &\cdot  &\cdot    &\cdot    &\cdot &\cdot  \\
e^{-i\theta}     &\cdot   &\cdot  &\cdot  &1   &\cdot   &\cdot   &\cdot  &e^{i\theta}     \\
\cdot   &\cdot &\cdot    &\cdot    &\cdot   &b &\cdot &1    &\cdot   \\
\cdot   &\cdot &1    &\cdot    &\cdot   &\cdot &b &\cdot    &\cdot   \\
\cdot  &\cdot    &\cdot &\cdot &\cdot  &1    &\cdot    &\frac 1b &\cdot  \\
e^{i\theta}     &\cdot   &\cdot  &\cdot  &e^{-i\theta}     &\cdot   &\cdot   &\cdot  &1
\end{array}
\right),
$$
which is the partial transpose of a positive matrix supported on $\ker\varrho$.

Recall that an entanglement witness $W$ is called optimal \cite{lew00} if it detects a maximal set of entanglement
with respect to the set inclusion. A decomposable entanglement witness $W$ is the partial transpose
of a positive matrix with the support $E$. Necessary conditions for optimality of a decomposable
witness $W$ were found
in \cite{kye_dec_wit} in terms of $E$: If $W$ is optimal then $E$ satisfies the following:
\begin{enumerate}
\item [(i)]
$E$ has no product vector.
\item [(ii)]
$E^\perp$ has a product vector.
\item [(iii)]
The convex hull of $\{|z\rangle\langle z|^\Gamma: z\in E\}$ is a face of $\mathbb D$ under the correspondence $\phi\mapsto C_\phi$.
\end{enumerate}}

In the $2\otimes n$ cases, it was shown in \cite{aug} that (i) is equivalent to the optimality of $W$. With extra
necessary conditions (ii) and (iii) above, it was shown in \cite{kye_dec_wit} that (i) is not sufficient for optimality in general.
By the discussion above, it is now clear that the witness $W_\theta$ satisfies the above three conditions:
$\ker\varrho$ has no product vector; $(\ker\varrho)^\perp$ is spanned by the six product vectors in (\ref{kernel--});
The convex hull of $\{|z\rangle\langle z|^\Gamma: z\in \ker\varrho\}$ coincides with $(\varrho^\Gamma)^\prime$, and so, it is
actually an exposed face of $\mathbb D$. In the earlier version of this paper, the authors wrote that the witness $W_\theta$ is optimal.
But they found a fault in the reasoning after submission, as it was also observed in \cite{aug-comment}
where it was shown that $W_\theta$ is not optimal. Therefore, we may conclude that
the above three conditions are not sufficient in general for optimality of decomposable entanglement witnesses.

\section{Induced and non-induced faces}\label{indeced_face}

In Section \ref{sec:qutrit}, we have seen two kinds of simplicial faces of the convex set $\mathbb S$. The face $\Delta_5$ arising from
six product vectors in (\ref{exam-1}) is the intersection of $\mathbb S$ with a face of the larger convex set $\mathbb T$
consisting of all PPT states. On the other hands, its maximal face $\Delta_4$ has no such face of $\mathbb T$, because the smallest
face of $\mathbb T$ containing $\Delta_4$ has the intersection $\Delta_5$ with $\mathbb S$.
We distinguish these two cases.

A face $F$ of the convex set $\mathbb S$ is said to be {\sl induced} by a face $\tau(D,E)$ of $\mathbb T$, or
$\tau(D,E)$ {\sl induces} the face $F$ of $\mathbb S$ if the relation
$$
F=\tau(D,E)\cap\mathbb T,\qquad \inte F\subset\inte\tau(D,E)
$$
holds for a pair $(D,E)$ of subspaces in $\mathbb C^m\ot\mathbb
C^n$, as it was introduced in \cite{choi_kye}. This pair is uniquely
determined by the above relation. If a face $F$ of $\mathbb S$ is
induced by a face of $\mathbb T$, then we just say that it is an
{\sl induced} face. A face $\tau(D,E)$ induces a face of $\mathbb S$
if and only if it has a separable state in its interior. It was
shown in \cite{choi_kye} that this is the case if and only if the
pair $(D,E)$ satisfies the range criterion \cite{p-horo}, that is, there exist
product vectors $|x_i\otimes y_i\rangle$ such that
$D=\spa\{|x_i\otimes y_i\rangle\}$ and $E=\spa\{|\bar x_i\otimes
y_i\rangle\}$. This tells us how the partial converse of the range
criterion works.
In order to test for a separable state $\varrho$ to determine an induced face,
we denote by $P_1(\varrho)$ the set of all product vectors $|z\rangle$
such that the state $|z\rangle\langle z|$ belongs to the face determined by $\varrho$, and $P_2(\varrho)$ the set of all product vectors
$|x \otimes y\rangle\in {\mathcal R}(\varrho)$ such that $|\bar x\otimes y\rangle\in {\mathcal R}(\varrho^\Gamma).$
It was shown in \cite{ha_kye_12_osid} that $P_1(\varrho)\subset P_2(\varrho)$ holds in general, and the equality holds if and only
if $\varrho$ determines an induced face of $\mathbb S$.

\begin{proposition}\label{prop:induced_face}
Let $\mathcal P=\{|x_i\otimes y_i\rangle :1,2,\dots,n\}$ be a family of normalized product vectors.
Then the convex hull $C_{\mathcal P}$ of $\{|z\rangle\langle z|:z\in{\mathcal P}\}$
is an induced simplicial face of $\mathbb S$ if and only if
$\mathcal P$ satisfies the condition {\rm (A)} in Proposition {\rm \ref{prel}} together with the following:
\begin{enumerate}
\item[(C)]
If a product vector $z=|x\otimes y\rangle$ belongs to the span of ${\mathcal P}$ and
$|\bar x\otimes y\rangle$ belongs to the span of $\bar{\mathcal P}$ then it is parallel to a vector in $\mathcal P$.
\end{enumerate}
\end{proposition}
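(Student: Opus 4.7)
The plan is to set $D=\spa\mathcal P$ and $E=\spa\bar{\mathcal P}$ and reduce the whole proposition to the identity $C_{\mathcal P}=\tau(D,E)\cap\mathbb S$. The preliminary observation is that for any interior point $\varrho_0=\sum_i\lambda_i|z_i\ran\lan z_i|$ of $C_{\mathcal P}$ (all $\lambda_i>0$), the range of $\varrho_0$ is exactly $D$ and the range of $\varrho_0^\Gamma$ is exactly $E$, so $\varrho_0\in\inte\tau(D,E)$. By the uniqueness mentioned in the text, any pair inducing $C_{\mathcal P}$ must coincide with $(D,E)$, so $\tau(D,E)$ is the only candidate.

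For the forward direction, suppose $C_{\mathcal P}$ is an induced simplicial face. The simplex property makes the $n$ vertices $|z_i\ran\lan z_i|$ affinely independent; since these pure states all share trace one, affine and linear independence agree, and (A) follows. For (C), take a product vector $|x\ot y\rangle\in D$ with $|\bar x\ot y\rangle\in E$. Then the pure product state $|x\ot y\ran\lan x\ot y|$ is separable and lies in $\tau(D,E)$, hence in $\tau(D,E)\cap\mathbb S=C_{\mathcal P}$. Being extreme in $\mathbb S$ it is also extreme in the subface $C_{\mathcal P}$, so it must equal one of the vertices; equivalently $|x\ot y\rangle$ is parallel to some vector in $\mathcal P$.

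Conversely, assume (A) and (C). By (A), $C_{\mathcal P}$ is a simplex. The inclusion $C_{\mathcal P}\subset\tau(D,E)\cap\mathbb S$ is immediate from the definitions. For the reverse inclusion, take $\sigma\in\tau(D,E)\cap\mathbb S$ and choose any separable decomposition $\sigma=\sum_j\mu_j|u_j\ot v_j\ran\lan u_j\ot v_j|$ with $\mu_j>0$. The range of $\sigma$ equals the span of the $|u_j\ot v_j\rangle$, forcing each into $D$; likewise each $|\bar u_j\ot v_j\rangle$ lies in $E$. Condition (C) then identifies each $|u_j\ot v_j\rangle$ with a scalar multiple of some element of $\mathcal P$, which places $\sigma$ inside $C_{\mathcal P}$. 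Since $\tau(D,E)$ is a face of $\mathbb T\supset\mathbb S$, the identity $C_{\mathcal P}=\tau(D,E)\cap\mathbb S$ makes $C_{\mathcal P}$ automatically a face of $\mathbb S$, and the first paragraph delivers the induced property. The main obstacle is the step controlling product vectors in an \emph{arbitrary} separable decomposition of $\sigma$: without condition (C), stray product vectors lying in $D$ with partial conjugates in $E$ could produce separable states in $\tau(D,E)\setminus C_{\mathcal P}$, spoiling the identity; this is exactly the phenomenon distinguishing the maximal face $\Delta_4$ (simplicial but not induced) from $\Delta_5$ in Section \ref{sec:qutrit}.
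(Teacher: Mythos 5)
Your proof is correct and follows essentially the same route as the paper: both directions reduce to establishing the identity $C_{\mathcal P}=\tau(\spa{\mathcal P},\spa{\bar{\mathcal P}})\cap\mathbb S$, with (C) supplying the nontrivial inclusion and (A) giving the simplex structure. Your converse is in fact slightly more careful than the paper's, which verifies the inclusion only for pure product states in $\tau(D,E)$ and leaves implicit the decomposition-and-range argument you spell out for an arbitrary separable $\sigma$.
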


\begin{proof}
Suppose that $C_{\mathcal P}$ is an induced simplicial face. Then we have the condition (A) together with the relation
\begin{equation}\label{jjjj}
C_{\mathcal P}=\mathbb S\cap\tau(\spa{\mathcal P},\spa\bar{\mathcal P}).
\end{equation}
To prove the condition (C), suppose that a product vector $|z\rangle$ satisfies the assumption of (C). Then
$|z\rangle\langle z|\in\tau(\spa{\mathcal P},\spa\bar{\mathcal P})\subset C_{\mathcal P}$ by (\ref{jjjj}).
Since $C_{\mathcal P}$ is a simplex,
we see that the condition (C) holds.

For the converse, suppose that the conditions (A) and (C) holds. If $|z\rangle=|x\otimes y\rangle$ is a product vector and
$\varrho=|z\rangle\langle z|$ belongs to $\tau(\spa{\mathcal P},\spa\bar{\mathcal P})$ then we have
$$
|x\otimes y\rangle \in {\mathcal R}\varrho\subset \spa{\mathcal P},\qquad
|\bar x\otimes y\rangle\in {\mathcal R}\varrho^\Gamma\subset \spa\bar{\mathcal P}.
$$
By the condition (C), we see that $\varrho\in C_{\mathcal P}$. This shows one direction of the inclusion of (\ref{jjjj}).
Since the other direction is true always, we have the relation (\ref{jjjj}) which tells us that $C_{\mathcal P}$ is an induced face.
Finally, $C_{\mathcal P}$ is a simplex by the condition (A).
\end{proof}

In the remaining of this section, we construct a simplicial face which is not induced.
We modify the example (\ref{exam-1}) as follows:
\begin{equation}\label{exam-2}
\begin{aligned}
&(1,\sqrt b,0)^{\text{\rm t}} \otimes (1,\frac1{\sqrt b},0)^{\text{\rm t}},\,  &
&(1,-\sqrt b,0)^{\text{\rm t}} \otimes (1,-\frac1 {\sqrt b},0)^{\text{\rm t}},\\
&(0,1,\sqrt b)^{\text{\rm t}} \otimes (0,1,\frac1{\sqrt b})^{\text{\rm t}},\,  &
&(0,1,-\sqrt b)^{\text{\rm t}} \otimes (0,1,-\frac1 {\sqrt b})^{\text{\rm t}},\\
&(\sqrt b,0,1)^{\text{\rm t}} \otimes (\frac1{\sqrt b},0,1)^{\text{\rm t}},\, &
&(-\sqrt b,0,1)^{\text{\rm t}} \otimes (-\frac1 {\sqrt b},0,1)^{\text{\rm t}},
\end{aligned}
\end{equation}
where we still retain the condition $b\neq 1$.
We see that these six product vectors span the $6$-dimensional space whose orthogonal complement is spanned by
\begin{equation}\label{exam-2-perp}
\begin{aligned}
\sqrt{b}|1\rangle\otimes |2\rangle &- \frac 1{\sqrt b} |2\rangle\otimes |1\rangle,\\
\sqrt{b}|2\rangle\otimes |3\rangle &- \frac 1{\sqrt b} |3\rangle\otimes |2\rangle,\\
\sqrt{b}|3\rangle\otimes |1\rangle &- \frac 1{\sqrt b} |1\rangle\otimes |3\rangle.
\end{aligned}
\end{equation}
From this, it is easy to see that there are infinitely many product vectors in the span of product vectors (\ref{exam-2}).
Since all
 entries are real, these vectors do not satisfy the condition (C). Nevertheless, we show that these vectors
make a simplicial face. So we will get a simplicial face which is not induced. Actually we show that this is a subface of a simplicial face isomorphic to the $9$-simplex.

To do this, we consider
the product vectors:
\begin{equation}\label{exam-3}
\begin{aligned}
&(1,1,1)^{\text{\rm t}} \otimes (1,1,1)^{\text{\rm t}},\, &
&(1,1,-1)^{\text{\rm t}} \otimes (1,1,-1)^{\text{\rm t}},\\
&(1,-1,1)^{\text{\rm t}} \otimes (1,-1,1)^{\text{\rm t}},\,  &
&(-1,1,1)^{\text{\rm t}} \otimes (-1,1,1)^{\text{\rm t}}.\\
\end{aligned}
\end{equation}
These four product vectors span the $4$-dimensional space whose orthogonal complement is spanned by
$$
\begin{aligned}
&|1\rangle\otimes |2\rangle - |2\rangle\otimes |1\rangle,\qquad
|2\rangle\otimes |3\rangle - |3\rangle\otimes |2\rangle,\qquad
|3\rangle\otimes |1\rangle - |1\rangle\otimes |3\rangle,\\
&|1\rangle\otimes |1\rangle - |2\rangle\otimes |2\rangle,\qquad
|2\rangle\otimes |2\rangle - |3\rangle\otimes |3\rangle.
\end{aligned}
$$
From this, it is immediate to see that these product vectors satisfy the condition (A) and (B), and so determine
a simplicial face. We denote by $\mathcal P$ the family of normalizations of the above product vectors (\ref{exam-3})
together with six product vectors in
(\ref{exam-2}). We note that both ${\mathcal P}$ and $\bar{\mathcal P}$ span the full space $\mathbb C^3\ot\mathbb C^3$, and so
it is clear that the condition (C) does not hold.   In addition to this fact, we show that $C_{\mathcal P}$ is a simplicial face.  So $C_{\mathcal P}$ is also a simplicial face which is not induced.
This will prove that every interior point of $C_{\mathcal P}$ gives rise to a separable state of rank nine whose length is ten.
This disprove a conjecture in \cite{chen_dj_semialg}, as it was discussed in Introduction.

We first show that the convex set $C_{\mathcal P}$ is a simplex.
To do this, we label the  normalized vectors in (\ref{exam-2}) by $|z_1\rangle,|z_2\rangle,\dots, |z_6\rangle$, and  normalized vectors
in (\ref{exam-3}) by $|z_7\rangle,\dots, |z_{10}\rangle$. Suppose that $\sum_{i=1}^{10}\alpha_i |z_i\rangle\langle z_i|=0$.
If $|w\rangle$ is any one of (\ref{exam-2-perp}) then we have
$\langle z_i|w\rangle \neq 0$ for $i=7,8,9,10$. Since  $\{|z_7\rangle,\dots, |z_{10}\rangle\}$
is linearly independent, we see from the relation
$$
0= \sum_{i=1}^{10}\alpha_i |z_i\rangle\langle z_i|w\rangle =\sum_{i=7}^{10}\alpha_i\langle z_i|w\rangle | z_i\rangle
$$
that $\alpha_i=0$ for $i=7,8,9,10$. Since $\{|z_1\rangle,|z_2\rangle,\dots, |z_6\rangle\}$ is also linearly independent,
we  conclude that $\alpha_i=0$ for $i=1,2,\dots,6$.
Therefore, we see that $\{|z_i\rangle\langle z_i|: i=1,2,\dots,10\}$ is linearly independent, and so
$C_{\mathcal P}$ is a simplex.

In order to show that $C_{\mathcal P}$ is a face, we use the duality between positive linear maps and separable states to
see that $C_{\mathcal P}$ is a dual face of a positive linear map.
We recall the generalized Choi map $\Phi[\alpha,\beta,\gamma]$ between $M_3$ defined by
\[
\Phi[\alpha,\beta,\gamma](X)=
\begin{pmatrix}
\alpha x_{11}+\beta x_{22}+\gamma x_{33} & -x_{12} & -x_{13}\\
-x_{21} & \gamma x_{11}+\alpha x_{22}+\beta x_{33} & -x_{23}\\
-x_{31} & -x_{32} & \beta x_{11}+\gamma x_{22} +\alpha x_{33}
\end{pmatrix}
\]
for $X\in M_3$ and nonnegative real numbers $\alpha,\beta$ and $\gamma$, as it was introduced in \cite{ckl}.
We restrict our attention to the  maps $\Phi[\alpha,\beta,\gamma]$ satisfying the following condition
\[
0< \alpha <1,\quad \alpha+\beta+\gamma=2,\quad \beta \gamma=(1-\alpha)^2,
\]
which implies that $\Phi[\alpha,\beta,\gamma]$ is a positive linear map.
This condition is parameterized by
\[
\alpha (s)=\frac{(1-s)^2}{1-s+s^2},\quad \beta (s)=\frac{s^2}{1-s+s^2},\quad \gamma (s)=\frac 1{1-s+s^2},
\]
with $0<s<\infty$, $s\neq 1$.
Therefore, we obtain parameterized positive maps
\[
\Phi(s)=\Phi\left [\alpha (s),\beta (s),\gamma (s)\right],\qquad 0<s<\infty,\ s\neq 1
\]
as in \cite{ha_kye_11_osid} and \cite{ha_kye_12_osid} (See page 333 in \cite{ha_kye_11_osid} and page 6 in \cite{ha_kye_12_osid}). We also recall that every positive linear map $\Phi(s)$ gives rise to the dual face
\[
\Phi(s)^\prime=\{\rho\in \mathbb S:\text{\rm Tr}(C_{\Phi(s)}^{\text{\rm t}}\,\rho)=0\}
\]
of $\mathbb S$, where $C_{\Phi(s)}^{\text{\rm t}}$ is the transpose of the Choi matrix of the map $\Phi(s)$ given by
\[
\begin{aligned}
C_{\Phi(s)}=&\sum_{i,j=1}^3 |i\rangle \langle j|\otimes \Phi_s(|i\rangle \langle j|)\\
=&\begin{pmatrix}
\alpha (s) & \cdot & \cdot & \cdot & -1 & \cdot & \cdot & \cdot & -1\\
\cdot & \gamma (s) & \cdot& \cdot & \cdot & \cdot & \cdot & \cdot & \cdot\\
\cdot & \cdot &\beta (s) &  \cdot & \cdot & \cdot & \cdot & \cdot & \cdot\\
\cdot & \cdot & \cdot & \beta (s) & \cdot & \cdot & \cdot & \cdot & \cdot\\
-1 & \cdot & \cdot & \cdot & \alpha (s) & \cdot & \cdot & \cdot & -1\\
\cdot & \cdot & \cdot& \cdot & \cdot & \gamma (s) & \cdot & \cdot & \cdot\\
\cdot & \cdot & \cdot& \cdot & \cdot & \cdot & \gamma (s) & \cdot & \cdot\\
\cdot & \cdot & \cdot& \cdot & \cdot & \cdot & \cdot & \beta (s) & \cdot\\
-1 & \cdot & \cdot& \cdot & -1 & \cdot & \cdot & \cdot & \alpha (s)\\
\end{pmatrix}\in M_3\otimes M_3.
\end{aligned}
\]

 It is known from  Theorem 3 in \cite{ha_kye_12_osid} that every product vector whose product state
 belongs to the dual face $\Phi(1/b)'$ is one of the following forms
\begin{equation}\label{complex_prod_vec}
\begin{gathered}
(\bar a_1,\bar a_2,\bar a_3)^{\text{\rm t}}\otimes (a_1,a_2,a_3)^{\text{\rm t}}\quad \text{\rm with }\ |a_1|=|a_2|=|a_3|,\\
(0,\bar a_2,b\bar a_3)^{\text{\rm t}}\otimes (0,a_2,a_3)^{\text{\rm t}}\quad \text{\rm with }\ |a_2|^2=b|a_3|^2,\\
(b\bar a_1,0,\bar a_3)^{\text{\rm t}}\otimes, (a_1,0,a_3)^{\text{\rm t}}\quad \text{\rm with }\ |a_3|^2=b|a_1|^2,\\
(\bar a_1,b\bar a_2,0)^{\text{\rm t}}\otimes (a_1,a_2,0)^{\text{\rm t}}\quad \text{\rm with }\ |a_1|^2=b|a_2|^2,
\end{gathered}
\end{equation}
and so the dual face  $\Phi(1/b)^\prime$ is the convex hull of pure product states corresponding to these
product vectors.
We note that $\mathcal P$ is contained in the dual face $\Phi(1/b)^\prime$.
Furthermore, we see that $\mathcal P$ coincides with the set of product vectors in (\ref{complex_prod_vec})
with real components up to scalar multiplications.
Now, we consider the dual face $(\Phi(s) \circ \text{\rm t})'$ where $\Phi_s\circ \text{\rm t}$ is the composition of
$\Phi_s$ and the transpose map $\text{\rm t}$.
We note that
\[
C_{\Phi(s)\circ \text{\rm t}}
=\sum_{i,j=1}^3 |i\rangle \langle j| \otimes (\Phi(s)\circ \text{\rm t}) (|i\rangle \langle j|)
=\sum_{i,j=1}^3 |i\rangle \langle j|\otimes \Phi(s) (|j\rangle \langle i|)=C_{\Phi(s)}^{\Gamma}.
\]
Thus we see that a product vector $|x\otimes y\rangle $ belongs to the dual face
$(\Phi(s)\circ \text{\rm t})'$ if and only if $|\bar x\otimes y\rangle $ belongs
to the dual face $\Phi(s)'$. From this observation, we easily see that
$|x\otimes y\rangle\in {\mathcal P}$ if and only if
$|x\otimes y\rangle\in \Phi(1/b)'\cap (\Phi(1/b)\circ \text{\rm t})'$, which coincides with the dual face of the map
$\Phi(1/b)+\Phi(1/b)\circ \text{\rm t}$. This shows that the convex set $C_{\mathcal P}$ is just the dual face of the
positive map $\Phi(1/b)+\Phi(1/b)\circ \text{\rm t}$.
Consequently, $C_{\mathcal P}$ is a non-induced simplicial face of $\mathbb S$ with the center
$$
\varrho_0=\frac 1{10} \sum_{i=1}^{10}|z_i\rangle \langle z_i|,
$$
which is  isomorphic the $9$-simplex $\Delta_9$.

We note that the method to get PPT entanglement with Theorem \ref{main}
works in more general situations.
Suppose that $F$ is a face of the convex set $\mathbb S$, which is not necessarily isomorphic to a simplex.
Take an interior point $\varrho_0$ and a boundary point of $\sigma$ of $F$. If
the range spaces of $\varrho_0$ and $\sigma$ coincide, then we can take $\epsilon>0$ such that $\rho=\sigma-\epsilon \varrho_0$
is positive. We note that this state $\rho$ must be entangled. To see this, assume that $\rho$ is separable. Then
we see that $\sigma$ is the convex sum of two separable states $\varrho_0$ and $\rho$, and so
we conclude that $\rho\in F$ by the definition of face.
This contradiction shows that $\rho$ must be entangled. If the range spaces of $\varrho_0^\Gamma$ and $\sigma^\Gamma$ also coincide
then we can take $\epsilon>0$ such that both $\rho=\sigma-\epsilon \varrho_0$ and
$\rho^\Gamma=\sigma^\Gamma-\epsilon \varrho_0^\Gamma$ are positive. Then
we see that $\rho$ must be PPT entangled state by the same argument.

\begin{proposition}
Let $F$ be a face of the convex set $\mathbb S$ with an interior point $\varrho_0$ and a boundary point $\sigma$.
If $\rho=\sigma-\epsilon \varrho_0$ is positive for $\e>0$ then $\rho$ is an entangled state. If both $\rho$ and $\rho^\Gamma$ are positive
then $\rho$ is a PPT entangled state.
\end{proposition}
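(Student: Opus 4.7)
The plan is to prove the first assertion by contradiction, using the defining property of a face of $\mathbb{S}$ together with the fact that $\varrho_0$ is a relative interior point of $F$. The PPT entangled case then follows at once, since the extra hypothesis $\rho^{\Gamma}\ge 0$ merely places $\rho$ in $\mathbb{T}$.

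First I would rewrite $\rho=\sigma-\epsilon\varrho_0$ as a proper convex combination of states. Taking traces, $\tr\rho=1-\epsilon\ge 0$; and $\epsilon=1$ would force $\rho=0$ and hence $\sigma=\varrho_0$, contradicting that $\sigma$ is a boundary point while $\varrho_0$ is an interior point of $F$. So $\epsilon\in(0,1)$, the normalization $\tilde\rho:=\rho/(1-\epsilon)$ is a state, and
\[
\sigma=(1-\epsilon)\tilde\rho+\epsilon\varrho_0,
\]
with coefficient $\epsilon\in(0,1)$.

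Now suppose, for contradiction, that $\rho$, equivalently $\tilde\rho$, is separable, so $\tilde\rho\in\mathbb{S}$. Since $\sigma\in F$ is a proper convex combination of $\tilde\rho,\varrho_0\in\mathbb{S}$, the face axiom stated in Section \ref{sec:intro} forces $\tilde\rho\in F$. I would then derive the contradiction by perturbing $\varrho_0$ inside $F$: since $\varrho_0$ is a relative interior point of $F$, there is a relative neighborhood $U\subset F$ of $\varrho_0$, and for every perturbation $\delta$ with $\varrho_0+\delta\in U$, convexity yields
\[
\sigma+\epsilon\delta=(1-\epsilon)\tilde\rho+\epsilon(\varrho_0+\delta)\in F.
\]
This exhibits a relative neighborhood of $\sigma$ inside $F$, contradicting the hypothesis that $\sigma$ is a boundary point of $F$.

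The main (and essentially only) subtle point is reading \lq interior\rq\ and \lq boundary\rq\ in the relative sense given in the introduction, so that an open ball around $\varrho_0$ exists inside $F$; once this is fixed, the perturbation argument above is immediate. For the PPT part, the added hypothesis $\rho^{\Gamma}\ge 0$ simply places $\rho$ in $\mathbb{T}$, so combining with the entanglement already established one concludes that $\rho$ is a PPT entangled state.
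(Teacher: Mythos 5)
Your proof is correct and follows essentially the same route as the paper's: assume $\rho$ is separable, write $\sigma$ as a proper convex combination of $\varrho_0$ and the normalization of $\rho$, and invoke the face property of $F$. In fact you supply a step the paper leaves implicit --- the explicit perturbation argument showing that $\sigma$ would then be a relative interior point of $F$, which is the actual source of the contradiction --- so your write-up is, if anything, more complete than the paper's terse version.
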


Now, we return to the above simplicial face $C_{\mathcal P}=\Delta_9$.
For each $k=1,2,\dots,10$, we denote by $F_k$ the convex hull of
$\{|z_i\rangle\langle z_i|:|z_i\rangle \in \mathcal P,\,i\neq k\}$.
Then these $F_k$'s are maximal faces of $C_{\mathcal P}$,
which are isomorphic to $\Delta_8$ with the center
$$
\sigma_k=\frac 19\sum_{\underset{i\neq k}{i=1}}^{10} |z_i\rangle \langle z_i|
$$
of rank nine. We also see that for any interior point $\sigma$ of $F_k$, the range spaces of $\varrho_0$ and $\sigma$ coincide,
and so we can apply the above argument.
Using a symbolic computation program,
we can show that
\[
\lambda_k:=\text{\rm max}\{\epsilon:\sigma_k-\epsilon \varrho_0\in \mathbb T\}=
\begin{cases}
\dfrac{5(1+b)^2}{27(1+8b+b^2)},&\qquad  k=1,2,3,4,5,6,\\
\dfrac{5b}{3(1+8b+b^2)},&\qquad  k=7,8,9,10.
\end{cases}
\]
In this way, we have PPT entangled states $\rho_k=\sigma_k-\lambda_k \rho_0$ with
$\rk\rho_k=\rk\rho_k^{\Gamma}=8$, which are located at the boundary of $\mathbb T$.
\section{qubit-qudit cases}\label{sec:qubit-qudit}

In this section, we turn our attention to the $2\otimes n$ cases, and look for simplicial faces determined by
separable states with rank $n+1$, because the results in \cite{alfsen,chen_dj_ext_PPT,kirk} covers separable states with rank up to $n$.
We take $n+1$ product vectors in general
position. Then these vectors are linearly independent by Proposition
\ref{prop:prod_vec}, and so the corresponding pure product states
satisfy the condition (A) in Proposition~\ref{prel}. We will show
that these vectors also satisfy the condition (C) in Proposition
\ref{prop:induced_face} under a mild assumption. Consequently, the corresponding pure
product states determine an induced simplicial face as follows.

\begin{theorem}\label{2xn_thm}
Suppose that
$\mathcal P=\{ |e_i\otimes f_i\rangle :i=1,2,\dots,n+1\}$ is a family of normalized product vectors,
which are in general position in $\mathbb C^2\otimes \mathbb C^n$. We write
\[
|e_i\rangle=a_{i1}|e_1\rangle +a_{i2} |e_2\rangle\in\mathbb C^2,\quad i=3,4,\dots,n+1,
\]
then we have the following:
\begin{enumerate}
\item[(i)]
If $n=2$ then the face determined by ${\mathcal P}$ has infinitely many extreme points.
\item[(ii)]
Let $n\ge 3$. In this case, if $a_{n+1,1}\,a_{k2}\, \bar a_{n+1,2}\,\bar a_{k1}\notin \mathbb R$ for each $k=3,4,\dots, n$, then
$\mathcal P$ determines an induced simplicial face isomorphic to the $n$-simplex.
\end{enumerate}
\end{theorem}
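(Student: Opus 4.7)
The plan is to apply Proposition \ref{prop:induced_face}. Condition (A) there is immediate from Proposition \ref{prop:simplex}, since $n+1 \le 2m+2n-3 = 2n+1$ whenever $n \ge 2$ and $\mathcal P$ is in general position. The substance of the proof of (ii) is therefore to verify condition (C): every product vector $|x \otimes y\rangle$ in $\spa \mathcal P$ whose partial conjugate $|\bar x \otimes y\rangle$ lies in $\spa \bar{\mathcal P}$ must be parallel to a member of $\mathcal P$.

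First I would parametrise the product vectors in $\spa \mathcal P$. Using the decomposition $\mathbb C^2 \otimes \mathbb C^n = |e_1\rangle \otimes \mathbb C^n \oplus |e_2\rangle \otimes \mathbb C^n$, a vector $|e_1\rangle \otimes v_1 + |e_2\rangle \otimes v_2$ is a product vector precisely when $v_1$ and $v_2$ are parallel. Writing a general $\sum_{i=1}^{n+1} c_i |e_i \otimes f_i\rangle$ in this form and imposing $v_2 = \mu v_1$ (so that $|x\rangle = |e_1\rangle + \mu |e_2\rangle$ up to a scalar) yields
\[
-\mu c_1 |f_1\rangle + c_2 |f_2\rangle + \sum_{i=3}^{n+1} c_i (a_{i2} - \mu a_{i1}) |f_i\rangle = 0.
\]
By general position, any $n$ of $\{|f_i\rangle\}_{i=1}^{n+1}$ are linearly independent, so this family admits a unique (up to scalar) relation $\sum_{i=1}^{n+1} \beta_i |f_i\rangle = 0$ with every $\beta_i \neq 0$. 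Matching coefficients determines $c_1, c_2$ and $c_i$ ($i\ge 3$) up to one common scalar $t$; the only values of $\mu$ at which the formula degenerates are $\{0, \infty\} \cup \{a_{i2}/a_{i1} : i \ge 3\}$, and a direct check shows that at each such degenerate $\mu$ the only product vector arising is parallel to the corresponding $|e_i \otimes f_i\rangle \in \mathcal P$. For every other $\mu$ one obtains a genuinely new product vector in $\spa \mathcal P$, so without further constraint condition (C) would fail.

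Repeating the same analysis for $|\bar x \otimes y\rangle \in \spa \bar{\mathcal P}$ gives a second expression for $|y\rangle$, with $\bar \mu$ in place of $\mu$ and $\bar a_{ij}$ in place of $a_{ij}$ (the dependence relation $\sum \beta_i |f_i\rangle = 0$ is unchanged). Equating the ratio of the $|f_1\rangle$ coefficient to the $|f_k\rangle$ coefficient ($k \ge 3$) in the two expressions, and cancelling the common term $|\mu|^2 |a_{k1}|^2$, reduces cleanly to
\[
\mu\, a_{k1} \bar a_{k2} \in \mathbb R \qquad \text{for each } k = 3, \dots, n+1.
\]
Each such equation fixes $\arg \mu$ modulo $\pi$; the conditions for index $k$ ($3 \le k \le n$) and for index $n+1$ admit a common nondegenerate $\mu \in \mathbb C$ if and only if $(a_{k1}\bar a_{k2})/(a_{n+1,1}\bar a_{n+1,2}) \in \mathbb R$, equivalently $a_{n+1,1}\, a_{k2}\, \bar a_{n+1,2}\, \bar a_{k1} \in \mathbb R$. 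The hypothesis in (ii) forbids this for every $k = 3, \dots, n$, so no nondegenerate $\mu$ survives; hence every product vector $|x \otimes y\rangle$ satisfying both range conditions is parallel to a member of $\mathcal P$, which is exactly condition (C). Proposition \ref{prop:induced_face} then yields the induced simplicial face, and it is isomorphic to the $n$-simplex since $|\mathcal P| = n+1$.

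For (i), with $n = 2$ only the single index $k = 3 = n+1$ appears, so $\mu a_{31} \bar a_{32} \in \mathbb R$ is a single real equation that admits a one-parameter real family of solutions $\mu$. Each solution produces, by the formulas above, a product vector in $\spa \mathcal P$ whose partial conjugate lies in $\spa \bar{\mathcal P}$. Since $\mathbb S = \mathbb T$ on $\mathbb C^2 \otimes \mathbb C^2$, the smallest face of $\mathbb S$ containing $\mathcal P$ coincides with $\tau(\spa \mathcal P, \spa \bar{\mathcal P})$, and the continuum of product vectors just produced supplies infinitely many extreme points. The main obstacle in the whole argument will be the algebraic reduction to $\mu a_{k1} \bar a_{k2} \in \mathbb R$ and the careful bookkeeping at the degenerate values $\mu = a_{i2}/a_{i1}$, where the generic formula breaks down but has to be shown to yield only vectors parallel to members of $\mathcal P$.
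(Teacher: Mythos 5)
Your proof is correct, and it follows the same overall strategy as the paper: establish condition (A) from the linear-independence propositions and then verify condition (C) of Proposition \ref{prop:induced_face} by comparing coefficients in the product basis $\{|e_j\otimes f_k\rangle\}$, arriving at exactly the paper's reality condition (your $a_{k1}\bar a_{k2}\bar a_{n+1,1}a_{n+1,2}\in\mathbb R$ is the complex conjugate of the quantity in the hypothesis of (ii), so the two are equivalent). The computational route differs in a worthwhile way. The paper introduces unknowns $s_i,t_i,x_j,y_k$ for the two expansions simultaneously, pins down $x_1,x_2,y_1,y_2$ up to a real parameter $r$ from the $|e_1\otimes f_2\rangle$ and $|e_2\otimes f_1\rangle$ coefficients, and then extracts the condition for each $k$ as the vanishing of a $4\times 4$ determinant $(r-1)b_k(a_{n+1,1}a_{k2}\bar a_{n+1,2}\bar a_{k1}-\bar a_{n+1,1}\bar a_{k2}a_{n+1,2}a_{k1})$, forcing $r=1$. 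You instead first classify all product vectors in $\spa\mathcal P$ by the slope $\mu$ of the $\mathbb C^2$-factor, using the unique dependence relation $\sum\beta_i|f_i\rangle=0$ with all $\beta_i\neq 0$ (in effect reproving Lemma 10 of \cite{2xn}), and then impose the partial-conjugate condition as proportionality of two explicit formulas for $|y\rangle$, which cleanly yields the $n-1$ conditions $\mu a_{k1}\bar a_{k2}\in\mathbb R$, $k=3,\dots,n+1$. Your parametrization makes the symmetry among the indices $3,\dots,n+1$ and the failure in the $n=2$ case (only one condition, hence a real one-parameter family of surviving $\mu$) more transparent; the paper's version avoids having to discuss the degenerate slopes $\mu\in\{0,\infty\}\cup\{a_{i2}/a_{i1}\}$ separately, since its case split $s_{n+1}=0$ versus $s_{n+1}\neq 0$ absorbs them. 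Your treatment of the degenerate values and of part (i) is complete, so there is no gap.
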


\begin{proof}
Write $|f_{n+1}\rangle=\sum_{i=1}^n b_i|f_i\rangle$. Then none of ${a_{ij}}'s$ and ${b_k}'s$ is zero by assumption of general position.
Suppose that
a product vector $|x\otimes y\rangle$ satisfies the assumption of condition (C). We also write
$|x\rangle=x_1|e_1\rangle+x_2|e_2\rangle$ and $|y\rangle=\sum_{i=1}^n y_i|f_i\rangle$, with $x_i,y_j\in\mathbb C$.
Then there exist complex numbers $s_i,t_i$ such that
\begin{equation}\label{eq:2xn}
\begin{aligned}
|x\otimes y\rangle&=\sum_{i=1}^{n+1}s_i |e_i\otimes f_i\rangle
=s_1|e_1\otimes f_1\rangle+s_2|e_2\otimes f_2\rangle +\sum_{i=3}^n \sum_{j=1}^2s_i a_{ij}|e_{j}\otimes f_i\rangle \\
&\hskip4truecm +\sum_{j=1}^2\sum_{k=1}^n s_{n+1}a_{n+1,j}b_k|e_j\otimes f_k\rangle,\\
|\bar x\otimes y\rangle&=\sum_{i=1}^{n+1}t_i |\bar e_i\otimes f_i \rangle
=t_1|\bar e_1\otimes f_1\rangle+t_2|\bar e_2\otimes f_2\rangle +\sum_{i=3}^n \sum_{j=1}^2t_i \bar a_{ij}|\bar e_{j}\otimes f_i\rangle \\
&\hskip4truecm +\sum_{j=1}^2\sum_{k=1}^n t_{n+1}\bar a_{n+1,j}b_k|\bar e_j\otimes f_k\rangle .
\end{aligned}
\end{equation}
Since $\{|e_i\otimes f_j\rangle :i=1,2,\ j=1,2,\dots,n\}$ form a basis of $\mathbb C^2\otimes \mathbb C^n$,
we can solve $x_i,y_j$ and $s_k,t_k$
by comparing the coefficients of basis vectors in \eqref{eq:2xn},

If $s_{n+1}=0$, then we see that $x_1y_2=0$ and $x_2 y_1=0$.
It is easy to that
\[
|x\otimes y\rangle =\begin{cases}
s_1|e_1\otimes f_1\rangle &\quad \text{if }x_2=y_2=0,\\
s_2|e_2\otimes f_2\rangle &\quad \text{if }x_1=y_1=0,\\
s_k|e_k\otimes f_k\rangle\ (k=3,4,\dots,n) &\quad \text{if }y_1=y_2=0,
\end{cases}
\]
because product vectors $|e_i\otimes f_i\rangle$ are in general position.
This shows that there exist exactly $n$ product vectors in the span
of ${\mathcal P\setminus \{|e_{n+1}\otimes f_{n+1}\rangle\}}$ whose partial conjugates
belong to the span of partial conjugates of product vectors in ${\mathcal P}$.

Now, we consider the case $s_{n+1}\neq 0$. We look at the coefficients of $|e_1\otimes f_2\rangle$ and $|e_2\otimes f_1\rangle $ to get
\[
\begin{aligned}
x_1y_2&=s_{n+1}a_{n+1,1}\,b_2,\quad
x_2y_1&=s_{n+1}a_{n+1,2}\,b_1,\\
\bar x_1y_2&=t_{n+1}\bar a_{n+1,1}\,b_2,\quad
\bar x_2 y_1&=t_{n+1}\bar a_{n+1,2}\,b_1.
\end{aligned}
\]
We see that $x_i, y_j$ are nonzero, and so we may assume that $x_1=a_{n+1,1}$.
Then we have $y_2=s_{n+1}b_2$ and $t_{n+1}=s_{n+1}$ from the first and third equalities. From the second and forth equalities,
we also have $x_2=r \,a_{n+1, 2}$ and $y_1=s_{n+1}b_1/r$ for a nonzero real number $r$. Therefore, we have
\begin{equation}\label{eq:sol1}
x_1=a_{n+1,1},\quad x_2=ra_{n+1,2},\quad y_1=s_{n+1}\frac{b_1} r,\quad y_2=s_{n+1}b_2,\quad t_{n+1}=s_{n+1}
\end{equation}
for a nonzero real number $r$. If $n=2$ then this shows that there are infinitely many product vectors in the span
of ${\mathcal P}$ whose partial conjugates belong to the span of partial conjugates of product vectors in ${\mathcal P}$,
and this shows that the face determined by ${\mathcal P}$ has infinitely many extreme points. Consequently, we proved the assertion (i).

Next, suppose that $n\ge 3$.
For $k\ge 3$ we look at the  coefficients of $|e_1\otimes f_k\rangle $ and
$|e_2\otimes f_k\rangle$ to solve $y_k,\,s_k,\,t_k$ and $s_{n+1}$.
After substituting $x_1,x_2$ by the values in \eqref{eq:sol1} with $s_{n+1}=t_{n+1}$, we have the following equation
\begin{equation}\label{eq:sol2}
\begin{pmatrix}
-a_{n+1,1}   & a_{k1}  & 0    & a_{n+1,1}b_k\\
-r a_{n+1,2} & a_{k2}  & 0    & a_{n+1,2}b_k\\
-\bar a_{n+1,1}  & 0   & \bar a_{k1} &\bar a_{n+1,1}b_k\\
-r \bar a_{n+1,2}& 0   & \bar a_{k2} &\bar a_{n+1,2}b_k
\end{pmatrix}
\begin{pmatrix}
y_k \\ s_k \\ t_k\\ s_{n+1}
\end{pmatrix}
=\begin{pmatrix} 0\\0\\0\\0\end{pmatrix},
\end{equation}
where the determinant of the $4\times 4$ matrix is given by
\[
(r-1)b_k (a_{n+1,1}\,a_{k2}\, \bar a_{n+1,2}\,\bar a_{k1}-\bar a_{n+1,1}\, \bar a_{k2}\, a_{n+1,2}\, a_{k1}).
\]
This determinant must be zero, because the homogeneous equation (\ref{eq:sol2}) has a nonzero solution by the current
assumption $s_{n+1}\neq 0$.
By the assumption of the statement (ii), we have $r=1$. In this case, the matrix in (\ref{eq:sol2}) is of rank three,
with the solution $s_k=t_k=0$ and $y_k=s_{n+1}b_k$. Therefore, we have
\[
|x\otimes y\rangle =s_{n+1}|e_{n+1}\otimes f_{n+1}\rangle.
\]
By combining the results for the cases of both $s_{n+1}=0$ and $s_{n+1}\neq 0$, $\mathcal P$ satisfies the condition (C) in Proposition~\ref{prop:induced_face}.
We already know that $\mathcal P$ also satisfies the condition (A) in Proposition~\ref{prel}.
Therefore, $\mathcal P$ determines an induced simplicial face by Proposition~\ref{prop:induced_face}.
\end{proof}

It should be noted that every $(n+1)$-dimensional subspace of $\mathbb C^2\ot\mathbb C^n$ has infinitely many product vectors,
as it was shown in Lemma 10 of \cite{2xn}. If every vector in $\mathcal P$ has real entries, then ${\mathcal P}$ and $\bar{\mathcal P}$
coincide and so there are infinitely many product vectors in $\spa{\mathcal P}=\spa{\bar{\mathcal P}}$.
The curious condition $a_{n+1,1}\,a_{k2}\, \bar a_{n+1,2}\,\bar a_{k1}\notin \mathbb R$
may reflect this fact.

In the low dimensional cases, we can now classify all simplicial faces.
In the case of $2\ot n$ with $n=2,3$, two notions of separability and PPT coincide, and so
faces of the convex set $\mathbb S$ are classified in terms of pairs $(D,E)$ of subspaces of $\mathbb C^2\ot\mathbb C^n$.
We first find a necessary condition for a separable state to determine an induced simplicial face. We note that every face is induced in
the $2\ot 2$ and $2\ot 3$ cases.

\begin{proposition}\label{cor-2xn--}
Suppose that a $2\ot n$ separable state $\varrho$ of the form {\rm (\ref{sep})} determines an induced
simplicial face of the convex set $\mathbb S$. Then
$\{|x_i\rangle\}$ is pairwise distinct up to scalar multiplications. In the cases of $n=2,3$,
${\mathcal P}=\{|x_i\otimes y_i\rangle:i=1,2,\dots,k\}$ is in general position.
\end{proposition}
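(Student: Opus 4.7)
My plan is to treat the two assertions separately; the first holds for all $n$ and reduces to a Schmidt-type argument, while the second, for $n=3$, requires a direct analysis of product vectors lying in the $3$-dimensional subspace spanned by any triple of product states whose $y$-factors are dependent.

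For the first assertion, suppose for contradiction that $|x_1\rangle$ is parallel to $|x_2\rangle$. Since the pure product states appearing in a simplicial decomposition must be the pairwise distinct extreme points of the face, $|y_1\rangle$ and $|y_2\rangle$ are linearly independent. Absorbing phases, the first two terms of (\ref{sep}) combine into
\[
|x_1\rangle\langle x_1|\otimes M,\qquad M=\lambda_1|y_1\rangle\langle y_1|+\lambda_2|y_2\rangle\langle y_2|,
\]
with $M$ a rank-two positive operator on $\mathbb C^n$. Every such $M$ admits a one-parameter family of decompositions $\mu_1|y\rangle\langle y|+\mu_2|y'\rangle\langle y'|$ as a positive combination of two rank-one projections, and each yields a distinct pure product decomposition of $\varrho$. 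This contradicts the uniqueness of decomposition built into the simplicial structure.

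For the second assertion, the swap map $\varrho\mapsto\tilde\varrho$ between $\mathbb C^m\otimes\mathbb C^n$ and $\mathbb C^n\otimes\mathbb C^m$ preserves both $\mathbb S$ and $\mathbb T$ together with their face lattices, so $\tilde\varrho$ also determines an induced simplicial face. Applying the first assertion to $\tilde\varrho$ shows that $\{|y_i\rangle\}$ is pairwise linearly independent. For $n=2$, these two pairwise conditions are precisely the definition of general position, closing that case. For $n=3$, it remains to show that any three of the $|y_i\rangle$'s are linearly independent. Suppose for contradiction that $|y_1\rangle,|y_2\rangle,|y_3\rangle$ are pairwise independent but span only a $2$-dimensional subspace. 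Writing $|y_3\rangle=\alpha|y_1\rangle+\beta|y_2\rangle$ and $|x_3\rangle=c_1|x_1\rangle+c_2|x_2\rangle$ with all coefficients nonzero, a direct coefficient comparison inside $W:=\spa\{|x_i\otimes y_i\rangle:i=1,2,3\}$ shows that its product vectors form a $\mathbb{CP}^1$-family: to $|y\rangle=u|y_1\rangle+v|y_2\rangle$ there corresponds $|x\rangle\propto(c_1\beta u)|x_1\rangle+(c_2\alpha v)|x_2\rangle$. Repeating the computation for the conjugate span $\spa\{|\bar x_i\otimes y_i\rangle:i=1,2,3\}\subset\spa\bar{\mathcal P}$ yields the requirement $|\bar x\rangle\propto(\bar c_1\beta u)|\bar x_1\rangle+(\bar c_2\alpha v)|\bar x_2\rangle$. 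Equating with the complex conjugate of the first formula collapses to the single real equation $(\beta u)/(\alpha v)\in\mathbb R$, which cuts out a great circle in $\mathbb{CP}^1$. This great circle contains uncountably many points and thus produces uncountably many product vectors in $\spa\mathcal P$ whose partial conjugates lie in $\spa\bar{\mathcal P}$, while only finitely many of them can be parallel to a member of the finite set $\mathcal P$. This violates condition (C) of Proposition~\ref{prop:induced_face}, the desired contradiction.

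The main obstacle is the coefficient bookkeeping in the last step, in particular verifying that the joint condition ``$|x\otimes y\rangle\in W$ and $|\bar x\otimes y\rangle$ in the conjugate span'' simplifies cleanly to the single real relation $(\beta u)/(\alpha v)\in\mathbb R$; once this reduction is in hand, both the one-real-parameter family of solutions and the cardinality conclusion against finiteness of $\mathcal P$ are immediate.
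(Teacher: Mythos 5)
Your proof is correct and follows essentially the same route as the paper: pairwise independence of the $|x_i\rangle$ (and, via the swap symmetry, of the $|y_i\rangle$), followed by reducing a dependent triple $|y_1\rangle,|y_2\rangle,|y_3\rangle$ to a $2\otimes 2$ configuration that carries a one-real-parameter family of product vectors whose partial conjugates remain in the conjugate span, contradicting condition (C). The only cosmetic differences are that you obtain the first claim from non-uniqueness of decompositions of the rank-two operator $|x_1\rangle\langle x_1|\otimes M$ rather than from the induced-face membership criterion, and that your explicit great-circle computation in $\mathbb{CP}^1$ re-derives in place what the paper simply invokes as Theorem~\ref{2xn_thm}(i).
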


\begin{proof}
Assume that two of $\{|x_i\rangle\}$, say $|x_1\rangle$ and $|x_2\rangle$, are parallel to each other.
For each $|y\rangle \in\spa\{|y_1\rangle,|y_2\rangle\}$, we consider the
product state $\varrho_y=|x_1\ot y\rangle\langle x_1\ot y|$. Then we see that $|x_1\otimes y\rangle\in {\mathcal R}\varrho$
and $|\bar x_1\otimes y\rangle\in{\mathcal R}\varrho^\Gamma$. Therefore, we see that $\varrho_y$ belongs to the face
$\mathbb S\cap\tau({\mathcal R}\varrho,{\mathcal R}\varrho^\Gamma)$ of $\mathbb S$ determined by $\varrho$.
This face is not a simplex because there are infinitely many extreme points $\varrho_y$ for arbitrary
$|y\rangle \in \text{\rm span}\{|y_1\rangle,|y_2\rangle\}$.
Therefore, we see that $\{|x_i\rangle\}$ is pairwise distinct up to scalar multiplications.

For the second claim, it suffices to consider the the case of $n=3$.
If two of $\{|y_i\rangle\}$ are parallel to each other then $\varrho$ cannot determine a simplicial face
by the same reason as above. If three of them, say
$|y_1\rangle, |y_2\rangle, |y_3\rangle$, span a $2$-dimensional space $V$ of $\mathbb C^3$,
then ${\mathcal P}_0=\{|x_i\otimes y_i\rangle:i=1,2,3\}$ is contained in $\mathbb C^2\ot V$, and so the face determined by
${\mathcal P}_0$ contains already infinitely many extreme points by Theorem \ref{2xn_thm} (i). This shows that ${\mathcal P}$
must be in general position.
\end{proof}

In the case of $2\ot 2$, all pairs of subspaces giving nontrivial faces are classified in \cite{ha_kye_04}.
The possible pairs $(\dim D,\dim E)$ of natural numbers arising from non-trivial faces are listed by
$$
(1,1),\quad (2,2),\quad (3,3),\quad (3,4),\quad (4,3),\quad (4,4).
$$
If we take an interior point $\varrho$ of the face $\tau(D,E)$ then these numbers are nothing but $(\rk\varrho,\rk\varrho^\Gamma)$.
It is clear that the $(1,1)$ case gives rise to the $0$-simplices consisting of single points. For the $(2,2)$ case,
we take an interior point $\varrho=\sum_{i=1}^2 |z_i\rangle\langle z_i|$ with product vectors $|z_1\rangle,|z_2\rangle$. Then we see that
this face is simplicial if and only if $\{|z_i\rangle\}$ is in general position. If this is the case, then we have  $1$-simplices.
We show that these are all possible cases. To do this,
suppose that a state $\varrho$ with expression (\ref{sep}) is an interior point of a simplicial face $\tau(D,E)$.
The case $(3,3)$ is split into two subcases:
either both $D^\perp$ and $E^\perp$ are spanned by product vectors $x\ot y$ and $\bar x\ot y$, respectively,
or both of them are spanned by non-product vectors.
In the first case, it is clear that there are infinitely many extreme points. In the second case,
we note that three product vectors are in general position if and only if they form a generalized
unextendible product basis. Therefore, we see that there are also infinitely many extreme points in this case
by Theorem \ref{2xn_thm} (i). In the $(3,4)$ case, $D^\perp$ is spanned by a product vector, and we see that there are infinitely many
product vectors in $D$ by Lemma 10 of \cite{2xn}.
All of them give rise to extreme points of the face of $\tau(D,E)$.

We turn our attention to the $2\ot 3$ case.
In this case, we have the following possibilities:
$$
(1,1),\quad (2,2),\quad (3,3),\quad (3,4), \quad (4,4), \quad (4,5), \quad (4,6),\quad (5,5),\quad (5,6),\quad (6,6).
$$
by \cite{choi_kye}. Here, we list up the cases $(p,q)$ with $p\le q$ by the symmetry.
Suppose that $\varrho$ of the form (\ref{sep}) determines a simplicial face isomorphic to the $k$-simplex. If $k\le 3$, then
this face must be of type $(k,k)$ by the above list. We next consider the case $k=4$. In this case,
$\{|x_i\otimes y_i\rangle:i=1,2,3,4\}$ is in general position by Proposition \ref{cor-2xn--}, and so
their partial conjugates are also in general position. Therefore, both $\{|x_i\otimes y_i\rangle \}$
and $\{|\bar x_i\otimes y_i\rangle \}$ are  linearly independent by
Proposition \ref{prop:prod_vec}.
This show that the face must be of type $(4,4)$, and a face of type $(3,4)$ is never a simplicial face.
In most $(4,4)$ cases,
they actually determines a simplicial faces isomorphic to the $3$-simplex by Theorem \ref{2xn_thm} (ii).
For a face $\tau(D,E)$ of type $(4,5)$, we consider product vectors
$(0,1)^\ttt\ot (0,0,1)^\ttt$ and $(1,z)^\ttt\ot (1,z,z^2)^\ttt$ for
$z=0,1,\omega,\omega^2$, where $\omega$ is the third root of unity.
It is simple to check that these five product vectors satisfy the
conditions in Proposition \ref{prop:induced_face}. Therefore, we
have a simplicial face isomorphic to $\Delta_5$.
For the $(4,6)$ and $(5,6)$ cases, we also refer to Lemma 10 of \cite{2xn} to see that there are infinitely many extreme points.
In a recent preprint \cite{chen_dj_2xd},
it was shown in Lemma 10 that if $\dim D+\dim E>3n$ for subspaces $D$ and $E$ of $\mathbb C^2\ot\mathbb C^n$
then there are infinitely many product vectors in $D$ whose partial conjugates belong to $E$. Therefore,
it is a simple consequence of this result that there is no simplicial face of type $(5,5)$.

\section{Conclusion}

We analyzed the convex geometry of the convex set $\mathbb S$ consisting of all $3\ot 3$ separable states to compare
with the convex set $\mathbb T$ consisting of PPT states.
With this geometric reasoning, we could show that
every PPT entangled edge state of rank four arises as the difference of two separable states of rank five.
For the higher dimensional cases, we recall that
the maximal dimension of subspaces in $\mathbb C^m\ot\mathbb C^n$ without product vector is given by
$$
p=(m-1)\times (n-1),
$$
and generic $(p+1)$-dimensional subspaces contain exactly
$\binom{m+n-2}{n-1}$ numbers of product vectors up to scalar multiplications.
See \cite{bhat,kiem,part,walgate,wall}.
In the $3\ot 3$ case, we exploited the strict inequality
$$
p+1=5<6=\binom{m+n-2}{n-1}
$$
to get the picture shown in Figure 1.
In the $3\ot 4$ case, we have $p+1=7$ and $\binom{m+n-2}{n-1}=10$. If these ten product vectors are in general position
then they give rise to a simplicial face isomorphic to the $9$-simplex by Proposition \ref{prop:simplex}. Six of them must be
linearly independent by Proposition \ref{prop:prod_vec}, but we could not determine if seven of them are linearly independent
in general or not.
If $n=4$, then the the number $\binom{m+n-2}{n-1}=\binom {m+2}3$ exceeds eventually the whole dimension $(4m)^2$. Therefore,
we cannot expect linear independence of corresponding product states as in Proposition \ref{prop:simplex}.
It would be interesting to know to what extent our approach may work.
We also have constructed separable states with unique decompositions, but with asymmetric ranks.
As an application, we constructed an explicit example of a $3\otimes 3$ PPT entangled state of type $(9,5)$.

In the $2\ot n$ cases, the situations go in a very different way, because
$$
p+1=n=\binom{m+n-2}{n-1}
$$
for these cases. Even though we found
simplicial faces with $n+1$ vertices, these do not give the pictures as in Figure 1, because the corresponding product vectors
are already linearly independent. This may explain why it is very difficult to construct PPT entangled edge states of various types
in the $2\ot n$ cases. In fact, it is still open if there exists a $2\ot 4$ PPT entangled edge state of type $(6,6)$.
See \cite{kye-prod-vec,kye_osaka}.


\begin{thebibliography}{999}

\bibitem{alfsen}
E. Alfsen and F. Shultz,
\it Unique decompositions, faces, and automorphisms of separable states,
\rm J. Math. Phys. \bf 51 \rm (2010), 052201.

\bibitem{alfsen_2}
E. Alfsen and F. Shultz,
\it Finding decompositions of a class of separable states,
\rm \rm Linear Alg. Appl. {\bf 437} (2012), 2613--2629.


\bibitem{aug-comment}
R. Augusiak, J. Bae, J. Tura and M. Lewenstein,
\it Comment on "Separable states with unique decompositions",
\rm prerpint.
arXiv:1304.2040.

\bibitem{aug}
R. Augusiak, J. Tura and M. Lewenstein,
\it A note on the optimality of decomposable entanglement witnesses and completely entangled subspaces,
\rm J. Phys. A {\bf 44} \rm (2011),  212001.

\bibitem{bhat}
B. V. R. Bhat,
\it A completely entangled subspace of maximal dimension,
\rm  Int. J. Quant. Inf. \bf 4 \rm (2006), 325--330.


\bibitem{chen}
L. Chen and D. \v Z. Djokovi\'c,
\it Description of rank four entangled states of two qutrits having positive partial transpose,
\rm J. Math. Phys. \bf 52 \rm (2011), 122203.

\bibitem{chen_dj_2xd}
L. Chen and D. \v Z. Djokovi\'c,
\it Qubit-qudit states with positive partial transpose,
\rm Phys. Rev. A \bf 86 \rm (2012), 062332.

\bibitem{chen_dj_semialg}
L. Chen and D. \v Z. Djokovi\'c,
\it Dimensions, lengths and separability in finite-dimensional quantum systems,
\rm J. Math. Phys. \bf 54, \rm (2013), 022201.

\bibitem{chen_dj_ext_PPT}
L. Chen and D. \v Z. Djokovi\'c,
\it Properties and construction of extreme bipartite states having positive partial transpose,
\rm preprint. arXiv:1203.1364

\bibitem{ckl}
S.-J. Cho, S.-H. Kye and S. G. Lee,
\it Generalized Choi maps in $3$-dimensional matrix algebras,
\rm Linear Algebra Appl. \bf 171 \rm (1992), pp. 213--224

\bibitem{choi_kye}
H.-S. Choi and S.-H. Kye,
\it Facial structures for separable states,
\rm J. Korean Math. Soc. {\bf 49} (2012), 623--639.

\bibitem{choi-ppt}
M.-D. Choi,
\it Positive linear maps,
\rm Operator Algebras and Applications (Kingston, 1980), pp. 583--590,
Proc. Sympos. Pure Math. Vol 38. Part 2, Amer. Math. Soc., 1982.

\bibitem{cohen}
S. M. Cohen,
\it Sums of product operators that remain product operators,
\rm preprint. arXive 1210.0644


\bibitem{DiVin}
D. P. DiVincenzo, B. M. Terhal, and A. V. Thapliyal,
\rm Optimal decompositions of barely separable states,
\rm J. Mod. Opt. {\bf 47} (2000), 277--385.

\bibitem{eom-kye}
M.-H. Eom and S.-H. Kye,
\it Duality for positive linear maps in matrix algebras,
\rm Math. Scand. \bf 86 \rm (2000), 130--142.


\bibitem{ha_kye_04}
K.-C. Ha and S.-H. Kye,
\it Construction of entangled states with positive partial transposes based on indecomposable positive linear maps,
\rm Phys. Lett. A \bf 325 \rm (2004), 315--323.

\bibitem{ha_kye_11_osid}
K.-C. Ha and S.-H. Kye,
\it Entanglement witnesses arising from exposed positive linear maps,
\rm Open Syst. Inf. Dyn. \bf 18 \rm (2011), 323--337.

\bibitem{ha_kye_12_osid}
K.-C. Ha and S.-H. Kye,
\it Geometry of the faces for separble states arising from generalized Choi maps,
\rm Open Syst. Inf. Dyn. \bf 19 \rm (2012), 1250009.



\bibitem{ha_kye_park}
K.-C. Ha, S.-H. Kye, and Y. S. Park,
\it Entangled states with positive partial transposes arising from indecomposable positive linear maps,
\rm Phys. Lett. A \bf 313 \rm (2003), 163--174.

\bibitem{hhms}
L. O. Hansen, A. Hauge, J. Myrheim, and P. \O . Sollid,
\it Low-rank positive-partial-transpose states and their relation to product vectors,
\rm Phys. Rev. A  {\bf 85} \rm (2012), 022309.

\bibitem{horo-1}
M. Horodecki, P. Horodecki and R. Horodecki,
\it Separability of mixed states: necessary and sufficient conditions,
\rm Phys. Lett. A \bf 223 \rm (1996), 1--8.

\bibitem{p-horo}
P. Horodecki,
\it Separability criterion and inseparable mixed states with positive partial transposition,
\rm Phys. Lett. A \bf 232 \rm (1997),  \rm 333--339.



\bibitem{hlvc}
P. Horodecki, M. Lewenstein, G. Vidal, and I. Cirac,
\it Operational criterion and constructive checks for the separability of low-rank density matrices,
\rm Phys. Rev. A, \bf 62(3) \rm (2000), 032310.

\bibitem{kiem}
Y.-H. Kiem,
\it Algebraic geometry and matrix theory,
\rm private communication, January 2003.

\bibitem{kye-prod-vec}
Y.-H. Kiem, S.-H. Kye and J. Lee,
\it Existence of product vectors and their partial conjugates in a pair of spaces,
\rm J. Math. Phys. \bf 52 \rm (2011), 122201.

\bibitem{kirk}
K. A. Kirkpatrick,
\it Uniqueness of a convex sum of products of projectors,
\rm J. Math. Phys. {\bf 43} (2002), 684--686.


\bibitem{2xn}
B. Kraus, J. I. Cirac, S. Karnas, M. Lewenstein,
\it Separability in \rm 2xN \it composite quantum systems,
\rm Phys. Rev. A \bf 61 \rm (2000), 062302.

\bibitem{kye_dec_wit}
S.-H. Kye,
\it Necessary conditions for optimality of decomposable entanglement witness,
\rm Rep. Math. Phys, \bf 63 \rm (2012), 419--426.


\bibitem{kye_osaka}
S.-H. Kye and H. Osaka,
\it Classification of bi-qutrit positive partial transpose entangled edge states by their ranks,
\rm J. Math. Phys. {\bf 53} (2012), 052201.

\bibitem{lein}
J. M. Leinaas, J. Myrheim and P. \O . Sollid,
\it Numerical studies of entangled PPT states in composite quantum systems,
\rm Phys. Rev. A {\bf 81} (2010), 062329.

\bibitem{lew00}
M. Lewenstein, B. Kraus, J. Cirac and P. Horodecki,
\it Optimization of entanglement witness,
\rm Phys. Rev. A \bf 62 \rm (2000), 052310.

\bibitem{part}
K. R. Parthasarathy,
\it On the maximal dimension of a completely entangled subspace for finite level quantum systems,
\rm Proc. Indian Acad. Sci. Math. Sci. \bf 114 \rm (2004), 365--374.

\bibitem{peres}
A. Peres,
\it Separability Criterion for Density Matrices,
\rm Phys. Rev. Lett. \bf 77 \rm (1996), 1413--1415.

\bibitem{pitt}
A. O. Pittenger and M. H. Rubin,
\it Convexity and the separability problem of quantum mechanical density matrices,
\rm Linear Alg. Appl.  {\bf 346} (2002), 47--71.


\bibitem{sko}
\L . Skowronek,
\it Three-by-three bound entanglement with general unextendible product bases,
\rm J. Math. Phys. {\bf 52} (2011), 122202.


\bibitem{uhlmann}
A. Uhlmann,
\it Entropy and optimal decompositions of states relative to a maximal commutative subalgebra,
\rm Open Sys. Inf. Dyn. {\bf 5} (1998) 209--227.


\bibitem{walgate}
J. Walgate and A. J. Scott,
\it Generic local distinguishability and completely entangled subspaces,
\rm J. Phys. A \bf 41 \rm (2008), 375305.

\bibitem{wall}
N. R. Wallach,
\it An Unentangled Gleason's theorem,
\rm Contemp. Math. \bf 305 \rm (2002), 291--298.

\bibitem{werner}
R. F. Werner,
\it Quantum states with Einstein-Podolsky-Rosen correlations admitting a hidden variable model,
\rm Phys. Rev. A \bf 40 \rm (1989), 4277--4281.


\bibitem{wootters}
W. K. Wootters,
\it entanglement of formation of an arbitrary state of two qubits,
\rm Phys. Rev. Lett. {\bf 80} (1998), 2245--2248.

\bibitem{woronowicz}
S. L. Woronowicz,
\it Positive maps of low dimensional matrix algebras,
\rm Rep. Math. Phys. \bf 10 \rm (1976), 165--183.


\end{thebibliography}
\end{document}